\renewcommand{\P}{\mathcal{P}}
\theoremstyle{plain}
\newtheorem{theorem}{Theorem}[section]
\newtheorem{corollary}[theorem]{Corollary}
\newtheorem{hypothesis}[theorem]{Hypothesis}
\renewcommand{\tilde}{\widetilde}
\renewcommand{\hat}{\widehat}
\newcommand{\N}{\mathbb N}
\newcommand{\SAT}{$\mathsf{SAT}$\xspace}
\newcommand\ETH{\texorpdfstring{$\mathsf{ETH}$}{ETH}\xspace}
\newcommand\poly{\text{poly}}
\newcommand{\cnf}{$\mathsf{CNF}$\xspace}
\newcommand{\CSP}{$\mathsf{CSP}$\xspace}
\title{Conditional lower bounds for sparse parameterized 2-CSP:\\ A streamlined proof}
\date{}
\author{%
Karthik C. S.\thanks{Rutgers University, USA, \texttt{karthik.cs@rutgers.edu}} \and
D\'{a}niel Marx\thanks{CISPA, Germany, \texttt{marx@cispa.de}} \and
Marcin Pilipczuk\thanks{Institute of Informatics, University of Warsaw, Poland, \texttt{m.pilipczuk@mimuw.edu.pl}} \and
U\'{e}verton Souza\thanks{Universidade Federal Fluminense (UFF), Brazil, \texttt{ueverton@ic.uff.br}}}
\begin{document}

\maketitle

\begin{abstract}
    Assuming the Exponential Time Hypothesis (ETH), a result of Marx (ToC'10) implies that there is no $f(k)\cdot n^{o(k/\log k)}$ time algorithm that can solve 2-CSPs with $k$ constraints (over a domain of arbitrary large size  $n$) for any computable function $f$. This lower bound is widely used to show that certain parameterized problems cannot be solved in time $f(k)\cdot n^{o(k/\log k)}$ time (assuming the ETH). The purpose of this note is to give a streamlined proof of this result. 
\end{abstract}

\section{Introduction}
The goal of this note is to discuss a simple proof for a widely used conditional lower bound on the time needed to solve Constraint Satisfaction Problems (\CSP). Our focus is on 2-CSPs, that is, the special case where each constraint involves two variables. Formally,
an instance  $\Gamma$ of 2-\CSP  consists of a (constraint) graph $H$,   an alphabet set $\Sigma$, and, for each edge $\{u, v\} \in E(H)$, a constraint $C_{uv} \subseteq \Sigma \times \Sigma $. An \emph{assignment} for $\Gamma$  is a function $\sigma:V(H)\to\Sigma$. An edge $\{u, v\} \in E(H)$ is said to be \emph{satisfied} by an assignment $\sigma$ if and only if $(\sigma(u), \sigma(v)) \in C_{uv}$. The goal of the 2-\CSP is to determine if there is an assignment $\sigma$ that satisfies all the edges (i.e., constraints). Such an assignment is referred to as a satisfying assignment. 
A 2-\CSP admitting a satisfying assignment is said to be satisfiable.  Some authors use the name {\em binary \CSP} for 2-\CSP, emphasizing that each constraint is defined by a binary relation on two variables. The problem can be equivalently stated as \textsc{Partitioned Subgraph Isomorphism}, but here we prefer to present the results using the language of 2-\CSP.

Let us consider two simple examples how a 2-\CSP can model other algorithmic problems. If $\Sigma=[3]$ and every constraint $C_{uv}$ is the inequality relation (i.e., $([3]\times [3])\setminus \{(1,1),(2,2),(3,3)\}$), then the problem is the same as \textsc{3-coloring}: the instance is satisfiable if and only if the constraint graph has a proper 3-coloring of the vertices. The \textsc{$k$-Clique} problem on a graph $G$ can be reduced to 2-\CSP in the following way. Let the alphabet set be $\Sigma=V(G)$ and let the constraint graph be a clique
on $k$ variables $v_1$, $\dots$, $v_k$. For every $1\le i < j \le k$, the constraint $C_{v_iv_j}$ is the relation $\{(x,y)\in V(G)\times V(G) \mid \text{$x$ and $y$ are adjacent in $G$}\}$. Then it is easy to see that the 2-\CSP instance is satisfiable if and only if $G$ 
has a clique of size $k$. Note that in the second example, the alphabet size can be arbitrarily large. The lower bounds we consider are mostly relevant for such instances.

Given a 2-\CSP instance with $k$ variables, an exhaustive search algorithm can solve it by considering each of the $|\Sigma|^{k}$ possible assignments. In a sense, the reduction from \textsc{$k$-Clique} shows that this algorithm is essentially optimal. Chen et al.~\cite{CHKX06a,CHKX06b} showed that there is no $f(k)\cdot n^{o(k)}$ time algorithm for \textsc{$k$-Clique} for any computable function $f$, unless the Exponential-Time Hypothesis (ETH) fails (see Section~\ref{sec:prelim} for the formal definition of ETH). The reduction given above translates this result to a lower bound for 2-\CSP in a transparent way.
\begin{theorem}[Chen et al.~\cite{CHKX06a,CHKX06b}]\label{thm:denselower}
If there is an $f(k)\cdot |\Sigma|^{o(k)}$ time algorithm for 2-\CSP, where $\Sigma$ is the alphabet set, $k$ is the number of variables, and $f$ is a computable function, then the ETH fails. 
\end{theorem}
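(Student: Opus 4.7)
The plan is to obtain this theorem as an immediate corollary of the Chen et al.\ lower bound for \clique, using exactly the reduction already exhibited in the excerpt. So the work is essentially bookkeeping: translate a hypothetical fast algorithm for 2-\CSP back through the reduction and observe that it beats the $f(k)\cdot n^{o(k)}$ barrier for \clique.

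First I would write down the reduction precisely. Given a graph $G$ on $n$ vertices together with a parameter $k$, I produce the 2-\CSP instance $\Gamma_G$ with variable set $\{v_1,\dots,v_k\}$, alphabet $\Sigma = V(G)$, and for each pair $i<j$ the constraint $C_{v_iv_j}=\{(x,y)\in V(G)\times V(G): xy\in E(G)\}$, exactly as in the second example of the introduction. It is immediate that $G$ contains a $k$-clique iff $\Gamma_G$ is satisfiable (any satisfying assignment picks $k$ pairwise adjacent vertices, and conversely any $k$-clique yields a satisfying assignment). The construction runs in time polynomial in $n+k$, the number of variables of $\Gamma_G$ is exactly $k$, and the alphabet size is exactly $n$.

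Next I would plug in the hypothetical algorithm. Suppose, for contradiction, that 2-\CSP admits an algorithm running in time $f(k)\cdot |\Sigma|^{o(k)}$ for some computable $f$. Applying it to $\Gamma_G$ gives an algorithm for \clique on $G$ running in time
\[
f(k)\cdot n^{o(k)} + \poly(n,k) = f'(k)\cdot n^{o(k)}
\]
for a suitable computable $f'$. This contradicts the Chen et al.\ lower bound~\cite{CHKX06a,CHKX06b}, which rules out such an algorithm for \clique under \ETH, completing the proof.

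There is no real obstacle here: the reduction is syntactic, preserves the parameter $k$ exactly, and maps $n$ (the input size of \clique) to $|\Sigma|$ (the alphabet size of the 2-\CSP), so an $|\Sigma|^{o(k)}$ dependence in 2-\CSP transports to an $n^{o(k)}$ dependence in \clique with no loss. The only thing to be slightly careful about is absorbing the polynomial construction cost into the computable function $f'$, which is standard.
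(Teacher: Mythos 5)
Your proposal is correct and is exactly the argument the paper intends: the paper presents the same reduction from \textsc{$k$-Clique} (alphabet $\Sigma = V(G)$, a clique constraint graph on $k$ variables, adjacency constraints) in its introduction and then simply remarks that it ``translates'' the Chen et al.\ lower bound to 2-\CSP, which is precisely the bookkeeping you carry out. No issues.
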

Theorem~\ref{thm:denselower} is essentially tight (up to a constant factor in the exponent of $|\Sigma|$), but let us observe that the reduction from \textsc{$k$-Clique} creates dense 2-\CSP instances having $k$ variables and $\binom{k}{2}$ constraints. The lower bound does not remain valid if we consider sparse instances. Statements about sparse instances can be formulated by expressing the running time as a function of the number $m$ of constraints. As the reduction above creates an instance with $m=\Omega(k^2)$ constraints, the exponent cannot be better than $O(\sqrt{m})$.
\begin{corollary}
    \label{thm:sparseweak1}
If there is an $f(m)\cdot |\Sigma|^{o(\sqrt{m})}$ time algorithm for 2-\CSP, where $\Sigma$ is the alphabet set, $m$ is the number of constraints, and $f$ is a computable function, then the ETH fails. 
\end{corollary}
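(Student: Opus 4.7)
The plan is to derive the corollary immediately from Theorem~\ref{thm:denselower} by a parameter-substitution argument, exploiting the fact that the dense instances produced by the \textsc{$k$-Clique} reduction already realize the worst case $m = \Theta(k^2)$.

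First, I would recall the reduction from \textsc{$k$-Clique} to 2-\CSP described right before Theorem~\ref{thm:denselower}: it produces instances with $k$ variables, $m = \binom{k}{2}$ constraints, and alphabet $\Sigma = V(G)$. Theorem~\ref{thm:denselower} says that, assuming ETH, no algorithm can solve 2-\CSP in time $f(k)\cdot |\Sigma|^{o(k)}$; in particular, it rules out such algorithms even on these dense instances.

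Next I would argue by contraposition. Suppose there were an algorithm $\mathcal{A}$ solving 2-\CSP in time $f(m)\cdot |\Sigma|^{o(\sqrt{m})}$ for some computable $f$. Applying $\mathcal{A}$ to an instance with $k$ variables and $m = \binom{k}{2}$ constraints, its running time becomes
\[
f\!\left(\tbinom{k}{2}\right)\cdot |\Sigma|^{o\!\left(\sqrt{\binom{k}{2}}\right)} \;=\; g(k)\cdot |\Sigma|^{o(k)},
\]
where $g(k) := f(\binom{k}{2})$ is a computable function of $k$, and where I used $\sqrt{\binom{k}{2}} = \Theta(k)$ so that the $o(\sqrt{m})$ bound in the exponent translates to an $o(k)$ bound. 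This directly contradicts Theorem~\ref{thm:denselower}, so ETH must fail, proving the corollary.

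There is no real obstacle here: the corollary is essentially a restatement of Theorem~\ref{thm:denselower} under the substitution $m = \binom{k}{2}$, and the only thing worth being careful about is that the hypothesized algorithm is applied on the very dense instances already produced by the \textsc{$k$-Clique} reduction (so we never need to pad or add trivial constraints). The remark in the paragraph preceding the corollary---that the exponent cannot be better than $O(\sqrt{m})$ because $m = \Omega(k^2)$---is precisely the quantitative content of this argument.
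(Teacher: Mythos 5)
Your proposal is correct and is exactly the argument the paper intends: the corollary is stated as an immediate consequence of Theorem~\ref{thm:denselower} via the substitution $m=\binom{k}{2}$ on the dense instances produced by the \textsc{$k$-Clique} reduction, which is precisely what you do (and $g(k):=f(\binom{k}{2})$ is computable, while $o(\sqrt{m})$ becomes $o(k)$ since $\sqrt{\binom{k}{2}}=\Theta(k)$). The only pedantic point worth noting is that to invoke Theorem~\ref{thm:denselower} verbatim one should observe that the hypothesized running-time bound translates to $g(k)\cdot|\Sigma|^{o(k)}$ on \emph{every} instance (which it does, since $m\le\binom{k}{2}$ always), but your restriction to the dense instances equivalently yields the contradiction directly against the \textsc{$k$-Clique} lower bound underlying that theorem.
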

A different way to state the results about sparse instances is to consider only instances where the constraint graph has bounded degree. There is a very simple transformation that turns a 2-\CSP instance with $k$ variables and $m$ constraints into a 2-\CSP instance with $2m$ variables and a 3-regular  constraint graph: if a variable appears in $c$ constraints, then let us replace it with $c$ copies, connect these copies with a cycle of equality constraints, and let us introduce the original constraints in such a way that each copy of a variable appears in exactly one such constraint. A further simple transformation can be used to make the constraint graph bipartite, at the cost of increasing the number of variables by a constant factor. Corollary~\ref{thm:sparseweak1} and these transformations give the following slightly stronger statement.
\begin{corollary}\label{thm:sparseweak2}
If there is an $f(k)\cdot |\Sigma|^{o(\sqrt{k})}$ time algorithm for 2-\CSP on bipartite 3-regular constraint graphs, where $\Sigma$ is the alphabet set, $k$ is the number of constraints, and $f$ is a computable function, then the ETH fails. 
\end{corollary}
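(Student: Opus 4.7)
The plan is to reduce from Corollary~\ref{thm:sparseweak1} by exhibiting a transformation that maps an arbitrary 2-\CSP instance $\Gamma$ with $m$ constraints to an equivalent 2-\CSP instance $\Gamma'$ whose constraint graph is bipartite and 3-regular, has $\Theta(m)$ constraints, and uses the same alphabet $\Sigma$. Contraposing, any $f(k)\cdot|\Sigma|^{o(\sqrt{k})}$-time algorithm on bipartite 3-regular 2-\CSPs would then decide $\Gamma$ in time $f(\Theta(m))\cdot|\Sigma|^{o(\sqrt{m})}$, contradicting Corollary~\ref{thm:sparseweak1} and hence the \ETH.

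For the 3-regularization step I would follow the recipe indicated in the excerpt. After first padding $\Gamma$ with $O(1)$ trivial constraints per variable so that every variable has degree at least three in the constraint graph, I replace each variable $v$ of degree $c$ by $c$ fresh copies $v_1,\dots,v_c$, route each original incidence of $v$ through a distinct copy, and place equality constraints along the cycle $v_1 v_2 \cdots v_c v_1$. Each copy then has degree exactly $3$ (two cycle neighbors plus one original constraint), the total number of constraints becomes $m + \sum_v c_v = 3m$, and the equality cycle forces all copies of any $v$ to agree, so satisfiability is preserved.

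For the bipartization step I would apply a further gadget that converts the resulting 3-regular instance into a bipartite 3-regular one at constant-factor cost. My cleanest attempt would be to choose the cyclic orderings of the copies in the previous step, together with the matching of the $c$ original incidences to the $c$ copies of each variable, so that the resulting constraint graph admits a global $2$-coloring; this amounts to a local parity-matching condition at each variable that can be met whenever the copy cycles have suitable lengths. Odd-parity obstructions can be absorbed by doubling an equality cycle to twice its length, or by splicing a small fixed-size bipartite 3-regular widget between a pair of adjacent copies, both of which cost only $O(1)$ additional constraints per variable. The output $\Gamma'$ is then bipartite, 3-regular, has $\Theta(m)$ constraints over alphabet $\Sigma$, and is equivalent to $\Gamma$.

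The main obstacle will be the bipartization step: the standard generic tricks each fail in isolation, since plain edge subdivision drops internal degrees below $3$, while the bipartite double cover produces a 3-regular bipartite graph but allows the two copies of each variable to take inconsistent values. The argument therefore needs a bespoke gadget, or a careful parity-sensitive ordering of the equality cycles from the first step, that simultaneously preserves 3-regularity, bipartiteness, equivalence, and the $O(m)$ constraint bound; pinning down this gadget and verifying its correctness is the delicate part of the proof.
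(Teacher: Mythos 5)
Your overall plan is exactly the paper's: the paper proves this corollary only by the sketch preceding it (reduce from Corollary~\ref{thm:sparseweak1}, 3-regularize by replacing each degree-$c$ variable with $c$ copies joined by a cycle of equality constraints, then apply ``a further simple transformation'' to bipartize), and your first two steps match that sketch and are correct, including the $\Theta(m)$ accounting. The problem is the step you yourself flag as delicate: the bipartization. Your two concrete proposals do not work as stated. Doubling an equality cycle from length $c$ to $2c$ leaves $c$ of the copies with degree $2$ (they receive no original constraint), so 3-regularity is lost. The parity-matching idea cannot succeed either: if $c_v$ is odd (e.g.\ $c_v=3$ after padding), the copy-cycle of $v$ is itself an odd cycle no matter how you order the copies or assign incidences; and even with all copy-cycles even, bipartiteness requires \emph{every} cycle of the inflated graph to be even, including cycles alternating between original constraint edges and arcs of copy-cycles, which imposes a global system of parity conditions over the whole cycle space that local choices at each variable do not control. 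So as written the proof has a genuine gap at the one step that needs a new idea.

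A clean way to close it: starting from the 3-regular instance on constraint graph $G$, subdivide every constraint edge $\{u,v\}$ by a fresh variable $w_{uv}$ over the same alphabet $\Sigma$, with $C_{uw_{uv}}$ the equality relation and $C_{w_{uv}v}=C_{uv}$; the subdivision is bipartite and equivalent to the original, with original vertices of degree $3$ and edge-vertices of degree $2$. Now take two disjoint copies $S_1,S_2$ of this subdivided instance and add, for each edge $e$ of $G$, an equality constraint between the two copies of $w_e$. Every vertex now has degree exactly $3$, and the graph is bipartite with parts $V(G)_1\cup W_2$ and $W_1\cup V(G)_2$ (intra-copy edges join $V(G)_i$ to $W_i$, inter-copy edges join $W_1$ to $W_2$). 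Each copy is separately equivalent to the original instance and the cross equalities are satisfiable by using the same assignment in both copies, so equivalence holds; the number of constraints is still $\Theta(m)$ and the alphabet is unchanged, which completes your reduction to Corollary~\ref{thm:sparseweak1}.
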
 

Marx~\cite{marx-toc-treewidth} improved the lower bound to a result that is tight up to a logarithmic factor in the exponent.
\begin{theorem}[Marx~\cite{marx-toc-treewidth}]\label{thm:sparsemarx}
If there is an $f(k)\cdot |\Sigma|^{o(k/\log k)}$ time algorithm for 2-\CSP on bipartite 3-regular constraint graphs, where $\Sigma$ is the alphabet set, $k$ is the number of constraints, and $f$ is a computable function, then the ETH fails. 
\end{theorem}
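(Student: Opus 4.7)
The plan is to follow Marx's approach from~\cite{marx-toc-treewidth}, sharpening the $\sqrt{k}$ bound of Corollary~\ref{thm:sparseweak2} to $k/\log k$ by means of an embedding argument that loses only a logarithmic factor in the exponent. The starting point is the hard 2-CSP instance on the complete constraint graph $K_k$ with alphabet $\Sigma$ of size $n$ produced by the reduction from \clique underlying Theorem~\ref{thm:denselower}; under ETH this instance admits no $f(k)\cdot n^{o(k)}$ algorithm. The goal is to simulate it on a bipartite 3-regular constraint graph in such a way that a hypothetical $f(k')\cdot|\Sigma'|^{o(k'/\log k')}$ algorithm on the simulation would imply an $f(k')\cdot n^{o(k)}$ algorithm on the base instance, contradicting Theorem~\ref{thm:denselower}.

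\textbf{Embedding and reduction.} The technical core is an expander-based topological embedding of $K_k$ into a suitable bipartite 3-regular graph $H$. The embedding assigns to each vertex $v_i$ of $K_k$ a connected ``branch set'' $B_i\subseteq V(H)$, and to each edge $\{v_i,v_j\}$ of $K_k$ a path $P_{ij}$ in $H$ from $B_i$ to $B_j$. Its relevant parameters are the size $|V(H)|$, the path dilation, and the congestion $c$ (the maximum number of branch sets and paths passing through any vertex of $H$). Given such an embedding, I would build a 2-CSP on $H$ in which each vertex $v$ of $H$ is a variable with alphabet $\Sigma^{c(v)}$, storing one symbol of $\Sigma$ per branch set or path through $v$; equality constraints propagate the values within each branch set and along each path; and at the two branch sets connected by the path $P_{ij}$, the original constraint $C_{v_iv_j}$ is enforced on the appropriate coordinates. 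Satisfiability is preserved, and routine local modifications make the resulting constraint graph exactly bipartite and 3-regular.

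\textbf{Main obstacle.} The crux is constructing the embedding with sufficiently tight parameters. If all $\binom{k}{2}$ paths were routed with low congestion in the most naive fashion, one would end up with $|V(H)|=\Omega(k^2)$, which yields only the $\sqrt{k}$ bound of Corollary~\ref{thm:sparseweak2}. The $k/\log k$ improvement hinges on choosing $H$ to be a 3-regular bipartite expander and analyzing the routing via the probabilistic method: the expansion makes it possible to arrange the branch sets and paths with the right balance between congestion $c$, dilation, and $|V(H)|$, so that $c\cdot|V(H)|/\log|V(H)|=O(k)$ and hence $(|\Sigma|^c)^{o(|V(H)|/\log|V(H)|)}=n^{o(k)}$. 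Establishing this quantitative embedding---simultaneously controlling the size, bipartiteness, and 3-regularity of $H$ together with the routing parameters---is the main technical work of the proof; everything else is bookkeeping and standard local adjustments.
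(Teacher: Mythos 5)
There is a genuine gap, and it is fatal to the approach as outlined: you cannot start from the dense instance on $K_k$ produced by the \textsc{$k$-Clique} reduction. Write $K=|V(H)|$ and let $c$ be the congestion of your embedding. Since all $\binom{k}{2}$ pairs of branch sets must be joined (by touching or by a path), count witnesses per vertex of the 3-regular graph $H$: a vertex $x$ carrying at most $c$ branch sets/paths can witness at most $O(c^2)$ of these pairs (pairs meeting at $x$ or across one of the three edges at $x$), so $\binom{k}{2}\le O(c^2K)$ and hence $c=\Omega(k/\sqrt{K})$. Therefore $c\cdot K/\log K=\Omega\left(k\sqrt{K}/\log K\right)=\omega(k)$ for any choice of $K$ growing with $k$ (and a constant $K$ gives no savings at all), so your key requirement $c\cdot|V(H)|/\log|V(H)|=O(k)$ is unachievable. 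No routing cleverness in the expander can fix this: the quadratic number of edges of $K_k$ is exactly why the clique-based starting point caps out at the $|\Sigma|^{o(\sqrt{k})}$ bound of Corollary~\ref{thm:sparseweak2}.

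The missing idea is to change the source of hardness from a dense instance to a \emph{sparse} one. The paper starts from 2-\CSP with $|\Sigma|=3$ on an $n$-vertex 4-regular constraint graph $G$ (equivalently, 3-coloring), for which ETH rules out $2^{\delta n}$-time algorithms (Theorem~\ref{thm:3Col}). Since $G$ has only $O(n)$ edges, it embeds into a $k$-vertex 3-regular bipartite expander $H$ with depth $O(n\log k/k)$: map about $n/k$ vertices of $G$ to each vertex of $H$, decompose the resulting $O(n/k)$-degree demand multigraph into matchings, and route each matching with congestion $O(\log k)$ using the Leighton--Rao theorem (Theorem~\ref{thm:congestion}); this deterministic routing step, not a probabilistic argument, is where the expander and the $\log k$ loss enter. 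The instance on $H$ then has alphabet size $3^{O(n\log k/k)}$, so an $f(k)\cdot|\Sigma|^{o(k/\log k)}$ algorithm would run in time $f(k)\cdot 2^{o(n)}$, contradicting ETH. The remainder of your outline (per-vertex product alphabets, equality propagation within connected images, enforcing $C_{v_iv_j}$ where images touch, and the local adjustments for bipartiteness and 3-regularity) matches the paper's Theorem~\ref{thm:embedCSP} and is fine once the source instance is corrected.
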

Theorem~\ref{thm:sparsemarx} was obtained as a special case of a result in a much more general setting. Informally, Marx~\cite{marx-toc-treewidth} considered the 2-\CSP problem restricted to an arbitrary class $\mathcal{H}$ of constraint graphs and showed that, under any such restriction, an algorithm with running time $f(k)\cdot |\Sigma|^{o(t/\log t)}$ would violate the ETH, where $k$ is the number of variables, $t$ is the treewidth\footnote{Informally, treewidth is a measure of the proximity of the graph to a tree, and more formally is the size of the largest vertex set in a tree decomposition of the graph.} of the constraint graph, and $f$ is an arbitrary computable function. Specializing this result to a class of 3-regular bipartite graphs where treewidth is linear in the number of vertices (explicit constructions of bounded-degree expanders can be used to construct such classes) yields Theorem~\ref{thm:sparsemarx}.

\paragraph{Applications.}
An important aspect of Theorem~\ref{thm:denselower} is that it
can be used to obtain lower bounds for other parameterized problems.
W[1]-hardness proofs are typically done by parameterized reductions from \textsc{$k$-Clique}.
It is easy to observe that if a parameterized reduction increases the parameter only at most by a constant multiplicative factor, then this implies a lower bound
similar to Theorem~\ref{thm:denselower} for the target problem. In many cases, the reduction from \textsc{$k$-Clique} constructs an instance consisting of $k$ vertex-selection gadgets such that each gadget increases the parameter by a constant, and thus the parameter of the target instance is indeed linear in $k$. However, many of the more involved reductions use edge
selection gadgets (see e.g., \cite{DBLP:journals/tcs/FellowsHRV09}).
As the $k$-clique has $\Theta(k^2)$ edges, this means that the
reduction increases the parameter to $\Theta(k^2)$ and we can conclude
only the weaker bound stating that there is no $f(k)\cdot n^{o(\sqrt{k})}$-time algorithm for the
target problem (unless the ETH fails). If we want to obtain stronger
bounds on the exponent, then we have to avoid the quadratic blow-up of
the parameter and do the reduction from a different problem. One
possibility is to reduce from 2-\CSP on a 3-regular constraint graph and use the lower bound of Theorem~\ref{thm:sparsemarx}.  In such a reduction, we would need $k$ vertex-selection gadgets and $3k/2$ edge-selection gadgets. If each gadget increases the target parameter only by a constant, then the target parameter is again $O(k)$. Therefore, such a reduction and Theorem~\ref{thm:sparsemarx} allows us
to conclude that there is no $f(k)\cdot n^{o(k/\log k)}$-time
algorithm for the target problem.
The use of Theorem~\ref{thm:sparsemarx} has become a standard technique when proving almost-tight lower bounds ruling out $f(k)\cdot n^{o(k/\log k)}$ time algorithms
\cite{DBLP:conf/esa/MarxP15,DBLP:journals/jcss/JansenKMS13,DBLP:conf/stoc/CurticapeanDM17,DBLP:journals/siamdm/JonesLRSS17,DBLP:conf/focs/CurticapeanX15,DBLP:conf/esa/BonnetM16,DBLP:journals/algorithmica/GuoHNS13,DBLP:journals/toct/PilipczukW18a,DBLP:journals/dam/BonnetS17,DBLP:conf/esa/Bringmann0MN16,DBLP:journals/corr/abs-1808-02162,DBLP:conf/soda/LokshtanovR0Z20,DBLP:journals/corr/abs-1802-08189,DBLP:conf/iwpec/BonnetGL17,DBLP:journals/jacm/Cohen-AddadVMM21,DBLP:journals/siamcomp/ChitnisFHM20,DBLP:journals/talg/ChitnisFM21,DBLP:journals/algorithmica/AgrawalPSZ21,DBLP:conf/esa/CurticapeanDH21,DBLP:conf/ijcai/DeligkasEG22,DBLP:conf/aaai/KnopSS22,DBLP:journals/tcs/EibenGNRWY23,DBLP:journals/ipl/AmiriKMR19,DBLP:journals/algorithmica/GuoHNS13,DBLP:journals/mp/FominPRS23,DBLP:conf/aaim/CramptonCGJR13,DBLP:conf/esa/BonnetIJK19,DBLP:journals/algorithmica/BonnetCMP20,DBLP:journals/algorithmica/NederlofS22}. It seems that some kind of edge representation   is required in many W[1]-hardness proofs, and for such problems basing the reduction on Theorem~\ref{thm:denselower} would be able to rule out only algorithms with running time $f(k)\cdot n^{o(\sqrt{k})}$. For these problems, Theorem~\ref{thm:sparsemarx} is the only know way of obtaining a lower bound (almost) matching the $n^{O(k)}$ upper bounds. 





\paragraph{Our contribution.}
Our main contribution is to provide a streamlined proof of Theorem~\ref{thm:sparsemarx} and state some stronger formulations. As mentioned above,  Theorem~\ref{thm:sparsemarx} was obtained by considering the special case of expander graphs in the more general result of \cite{marx-toc-treewidth}. It turns out that the proof can be significantly shortened in this case: a multi-step combinatorial proof can be replaced by a known result about routing in expanders with congestion $O(\log n)$ \cite{DBLP:journals/jacm/LeightonR99}.

Instead of saying that the exponent cannot be $o(k/\log k)$, we state the result in a stronger form by stating a lower bound for every fixed $k$. A similar formulation was stated by Cohen-Addad et al.~\cite{DBLP:journals/jacm/Cohen-AddadVMM21}. It seems that such a statement is particularly useful to obtain in a clear way lower bounds that involve multiple parameters (where the little-$o$ notation is unclear). 
\begin{restatable}{theorem}{main}\label{thm:main} 
 There exists $\alpha>0$ and $k_0\in\mathbb{N}$ such that the following holds. If there exists some fixed even integer $k\ge k_0$ and an algorithm $\mathcal{A}_k$ with the following guarantees, then \ETH is false.
 \begin{description}
 \item[Input:] $\mathcal{A}_k$ takes as input a 2-\CSP instance $\Gamma(H,\Sigma,\{C_{uv}\}_{\{u,v\}\in E(H)})$ where $H$ is a 3-regular simple bipartite graph on $k$ vertices. 
 \item[Output:] $\mathcal{A}_k$ outputs 1 if and only if there is a satisfying assignment to $\Gamma$.
     \item[Run time:] $\mathcal{A}_k$ runs in time $O(|\Sigma|^{\alpha k/\log k})$.
 \end{description} 
\end{restatable}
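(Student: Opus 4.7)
The plan is to reduce the dense 2-\CSP hardness of Theorem~\ref{thm:denselower} to the claimed sparse hardness; the main technical ingredient replacing the multi-step combinatorial construction of~\cite{marx-toc-treewidth} is the Leighton--Rao multicommodity routing theorem in expanders~\cite{DBLP:journals/jacm/LeightonR99}, which says that in a constant-degree expander on $N$ vertices, any multicommodity demand satisfying the cut condition can be simultaneously routed with congestion $O(\log N)$.

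Concretely, I would set $k' = \lfloor k/\log k \rfloor$, fix an explicit 3-regular bipartite expander $H$ on $k$ vertices (e.g.\ an LPS Ramanujan graph), and designate $k'$ anchor vertices $h_1, \ldots, h_{k'} \in V(H)$. Applying Leighton--Rao to the demand of one unit of flow between each anchor pair, I would obtain paths $\{P_{ij}\}_{1\le i<j\le k'}$ in $H$ with simultaneous edge-congestion $O(\log k)$; the required cut condition is supplied by the spectral expansion of $H$. Given a dense 2-\CSP $\Gamma' = (K_{k'},\Sigma',\{C_{v_iv_j}\})$, I would then build a sparse 2-\CSP $\Gamma$ on $H$: each variable $x_u$ takes alphabet $(\Sigma')^{c_u}$ indexed by the paths through $u$ (with an extra coordinate for $v_i$ if $u$ is the anchor $h_i$); each edge of $H$ carries a constraint enforcing path-value consistency, anchor initialization, and, on one designated edge of each $P_{ij}$, the original constraint $C_{v_iv_j}$. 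Correctness is straightforward: the anchor coordinates of any satisfying assignment of $\Gamma$ give one of $\Gamma'$, and any assignment of $\Gamma'$ lifts to $\Gamma$ by propagating values along the paths.

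For the parameter analysis, $|\Sigma| = (\Sigma')^{O(\log k)}$, so running $\mathcal{A}_k$ on $\Gamma$ would solve $\Gamma'$ in time $O(|\Sigma|^{\alpha k/\log k}) = O((\Sigma')^{O(\alpha k)})$. Choosing $\alpha$ and $k_0$ so that this falls below the ETH-based dense lower bound should then close the loop.

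The main subtlety---and the step I expect to be the principal obstacle---is that the naive bookkeeping yields exponent $O(\alpha k) = O(\alpha k' \log k')$ for $\Gamma'$, whereas Theorem~\ref{thm:denselower} as stated only rules out $(\Sigma')^{o(k')}$ time, so a direct contradiction does not go through. Closing this $\log k'$ gap requires either sharpening the dense base hardness (by tracking constants through the standard ETH-to-$K_{k'}$ reduction one obtains a lower bound of the form $(\Sigma')^{\delta k'}$ for a fixed $\delta > 0$, valid for every sufficiently large $k'$, which is the quantitative form one really needs), or restructuring the routing so that each path contributes only $O(1)$ to the effective alphabet per vertex rather than joining $O(\log k)$ independent coordinates, so that the overall blowup enters as a constant in the exponent. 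Making one of these refinements work with a single $\alpha$ independent of $k$ is the heart of the streamlined proof.
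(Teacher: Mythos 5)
There is a genuine gap, and it is structural rather than a matter of missing bookkeeping. First, the congestion bound you invoke is not available for your demand pattern. Routing one unit of flow between \emph{every} pair of $k' = \lfloor k/\log k\rfloor$ anchors means routing $\binom{k'}{2} = \Theta(k^2/\log^2 k)$ paths through a 3-regular graph with only $1.5k$ edges; since every path uses at least one edge, the \emph{average} edge congestion is already $\Omega(k/\log^2 k)$, so no routing scheme can achieve congestion $O(\log k)$. The Leighton--Rao theorem gives congestion $O(\log N)$ only for bounded-degree demand graphs (a matching, or more generally a constant number of matchings); a clique of demands on $k'$ anchors decomposes into $k'-1$ matchings and hence costs at least a factor $k'$ more. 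Consequently each vertex of $H$ lies on $\mathrm{poly}(k)$ paths, $|\Sigma|$ is $(\Sigma')^{\mathrm{poly}(k)}$ rather than $(\Sigma')^{O(\log k)}$, and the running time $|\Sigma|^{\alpha k/\log k}$ does not come close to beating the brute-force $(\Sigma')^{k'}$ for the dense instance. The $\log k'$ gap you flag at the end is therefore not the only obstacle, and neither of your proposed repairs closes it: even with a sharpened dense bound of the form $(\Sigma')^{\delta k'}$, your exponent is at best $\Omega(\alpha k'\log k')$, and since the theorem must hold with a single $\alpha$ for \emph{every} $k\ge k_0$, the growing $\log k'$ factor is fatal.

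The paper escapes this by reversing the direction of the reduction: instead of spreading a \emph{dense} instance on $k'<k$ variables over the expander, it compresses a \emph{sparse, constant-alphabet, exponentially hard} instance on $n\gg k$ variables into it. Concretely, one starts from 3-coloring of a 4-regular $n$-vertex graph $G$ (a 2-\CSP with $|\Sigma'|=3$ that requires $2^{\delta n}$ time under ETH), maps $V(G)$ onto $V(H)$ in a balanced way ($O(n/k)$ vertices of $G$ per vertex of $H$), and only then routes: the resulting demand multigraph has maximum degree $O(n/k)$, so it decomposes into $O(n/k)$ matchings, each routed by Leighton--Rao with congestion $O(\log k)$, for total depth $O(n\log k/k)$. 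The new alphabet has size $3^{O(n\log k/k)}$, so $\mathcal{A}_k$ runs in time $3^{O(\alpha n)}$, which for small enough $\alpha$ is below $2^{\delta n}$. The essential idea your proposal is missing is this choice of source problem: the hardness must be exponential in a parameter $n$ that is much larger than $k$, so that the $\log k$ routing overhead is absorbed into the exponent $\alpha$ rather than compared against a lower bound that is itself only singly exponential in the number of constraints.
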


Clearly, Theorem~\ref{thm:main} implies Theorem~\ref{thm:sparsemarx}: for a fixed sufficiently large (even) value of $k$, the hypothetical algorithm of Theorem~\ref{thm:sparsemarx} would satisfy the guarantees in Theorem~\ref{thm:main}.

Jaffke et al.~\cite{DBLP:conf/soda/JaffkeLMPS23}  proved a version of Theorem~\ref{thm:sparsemarx} where there is an upper bound on the alphabet size by a function of $k$; this result was used to establish a lower bound for the problem \textsc{Global Label MinCut}. Intuitively, if we set the alphabet size to $|\Sigma|=2^k$, then we expect that $|\Sigma|^{O(k)}=2^{O(k^2)}$ is the best possible running time. However, the formulations of Theorems~\ref{thm:sparsemarx} and \ref{thm:main} cannot be used to establish such results, as the number of possible instances with this restriction is bounded for each value of $k$. The following 
theorem proves bounds of this form; for example, it shows that, assuming ETH, there is no $2^{o(k^2/\log k)}$ time algorithm even with the restriction $|\Sigma|\le 2^k$.

\begin{restatable}{theorem}{mainalso}\label{thm:mainwithupperbound}
Let $f:\mathbb{N}\to\mathbb{N}$. We say $f$ is \emph{good} if it is non-decreasing, unbounded, and for all $n\in\mathbb{N}$, $f(n)$ can be computed in $\poly(f(n))$ time.    If there is a function $f$ that is good and an algorithm $\mathcal{A}$
with the following guarantees, then ETH is false.
\begin{description}
\item[Input:] $\mathcal{A}$ takes as input a 2-\CSP instance $\Gamma(H,\Sigma,\{C_{uv}\}_{\{u,v\}\in E(H)})$ where $H$ is a 3-regular simple bipartite graph on $k$ vertices and $|\Sigma| < f(k)$.
\item[Output:] $\mathcal{A}$ outputs 1 if and only if there is a satisfying assignment to $\Gamma$.
\item[Run time:] $\mathcal{A}$ runs in time at most
$f(k-1)^{o(k/\log k)}$.
\end{description}
\end{restatable}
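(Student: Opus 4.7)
The plan is to refute \ETH by using the hypothetical $\mathcal{A}$ to solve 3-\SAT in time $2^{o(n)}$. The key observation is that the proof of Theorem~\ref{thm:main} supplies a reduction that is \emph{parameterized} by the desired number of 2-\CSP variables: from a sparsified 3-\SAT instance on $n$ variables, and for any target value $k_0 \le k \le n$, one can construct in polynomial time an equisatisfiable 2-\CSP instance on a 3-regular bipartite graph with $k$ variables and alphabet size at most $2^{c n \log k / k}$, for some absolute constant $c>0$. I would extract this parameterized form by inspecting the expander-routing reduction underlying Theorem~\ref{thm:main}: since a 3-regular bipartite expander on $k$ vertices admits routings with congestion $O(\log k)$ (Leighton--Rao), each edge-gadget can simultaneously handle $O(\log k)$ clauses, and each 2-\CSP variable then needs alphabet of only $2^{\Theta((n/k)\log k)}$.

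Given this parameterized reduction, and assuming $f$ and $\mathcal{A}$ as hypothesized, I would proceed as follows. Given a 3-\SAT instance on $n$ variables, I use that $f$ is good to compute the smallest integer $k \ge k_0$ satisfying $(k/\log k)\log f(k) > c n$, equivalently $2^{c n \log k / k} < f(k)$. The left-hand side is non-decreasing in $k$ and unbounded (since both $k/\log k$ and $\log f(k)$ are), so such $k$ exists; moreover, by the minimality of $k$, $\log f(k-1) \le c n \log k / k$. Apply the parameterized reduction with this $k$ to obtain an equisatisfiable 2-\CSP instance $\Gamma$ on a 3-regular bipartite graph with $k$ variables and alphabet strictly below $f(k)$, and run $\mathcal{A}$ on $\Gamma$. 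Its running time is $f(k-1)^{o(k/\log k)}$, so, writing the exponent as $h(k)$ with $h(k)/(k/\log k) \to 0$, we compute
\[
  h(k)\cdot \log f(k-1) \;\le\; h(k)\cdot \frac{c n \log k}{k} \;=\; c n \cdot \frac{h(k)\log k}{k} \;=\; o(n).
\]
This yields a $2^{o(n)}$-time algorithm for 3-\SAT, contradicting \ETH. (For small $n$, where the required $k$ might be below $k_0$, one simply brute-forces.)

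The main obstacle is the first step: extracting the parameterized reduction with the explicit alphabet bound $2^{c n \log k / k}$ from the proof of Theorem~\ref{thm:main}. The latter will likely be stated for \emph{fixed} $k$, so I would re-examine its expander-routing argument to verify that the construction is uniform in $k$: the congestion factor $O(\log k)$ from Leighton--Rao enters multiplicatively into the exponent of the alphabet, while $k$ otherwise appears only as the number of vertices of the underlying expander. If a clean extraction from the proof of Theorem~\ref{thm:main} is not transparent, I would instead redo the construction from scratch, mirroring the argument but leaving $k$ as a free parameter throughout, which is exactly the regime in which the streamlined expander-routing proof is most natural.
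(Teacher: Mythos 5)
Your overall strategy coincides with the paper's. In particular, the reduction you propose to ``extract'' is already stated in the paper in a $k$-uniform form: Corollary~\ref{cor:CSPembed} takes the target size $k$ as an explicit input and produces a 3-regular bipartite instance on at most $k$ vertices with alphabet of size $|\Sigma|^{O\left(\left(1+\frac{|V|+|E|}{k}\right)\log k\right)}$, so the step you flag as the main obstacle is not actually one (you do, however, need the intermediate ETH-hardness of a constant-alphabet bounded-degree 2-\CSP, which the paper takes from Theorem~\ref{thm:3Col}). Your choice of $k$ as the first crossover point where $f(k)$ exceeds the alphabet bound is the mirror image of the paper's choice (the largest $k$ with $f(k-1)$ below the bound), and both yield the two needed properties: the constructed alphabet is a legal input for $\mathcal{A}$, and $f(k-1)\le 2^{O(n\log k/k)}$, so that $f(k-1)^{o(k/\log k)}=2^{o(n)}$. (One small slip: minimality of $k$ gives $\log f(k-1)\le cn\log(k-1)/(k-1)$, not $\le cn\log k/k$; the inequality between $\log(k-1)/(k-1)$ and $\log k/k$ goes the other way. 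This costs only a constant factor and does not affect the $o(n)$ conclusion.)

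The one genuine gap is the step ``I use that $f$ is good to compute the smallest integer $k$\dots''. Evaluating $f(k)$ is only guaranteed to take $\poly(f(k))$ time. For $k$ just below the crossover, $f(k)$ can be as large as $2^{\Theta(n\log k/k)}$, and at the crossover itself $f(k)$ can be arbitrarily large (a tower of exponentials, say), so naive evaluation of $f$ along the search can exceed the $2^{o(n)}$ budget, or indeed any time budget expressed in terms of $n$. This is precisely why the hypothesis demands that $f(n)$ be computable in $\poly(f(n))$ time rather than merely computable: one must run each evaluation of $f(k)$ with a time cutoff and, when the cutoff is exceeded, infer from the $\poly(f(k))$ guarantee that $f(k)$ already exceeds the alphabet bound $2^{cn\log k/k}$. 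That inference is only sound once $k$ is at least a constant $k_0$ depending on the degree of $f$'s polynomial (so that the alphabet bound is below the value implied by the timeout), with smaller instances handled by brute force. The paper implements exactly this truncated-evaluation search; your proposal needs the same mechanism spelled out for the choice of $k$ to be computable within the claimed running time.
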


We remark that under \#\ETH~\cite{DBLP:journals/talg/DellHMTW14}, the conditional time lower bounds in the above two theorems hold for the counting version of the 2-\CSP problem (even under parsimonious reductions). 


\paragraph{Original Proof in \cite{marx-toc-treewidth}.}
At a high level, the proof of Marx~\cite{marx-toc-treewidth} can be described in the following way. It is known that a $2^{o(n)}$ algorithm for 3-coloring an $n$-vertex 4-regular graph would violate the ETH. Our goal is to use the hypothetical $f(k)\cdot n^{o(t/\log t)}$ algorithm for some graph class $\mathcal{H}$ to efficiently solve the 3-coloring problem (given the reduction earlier, it will be convenient to consider 3-coloring as a 2-\CSP on $|\Sigma|=3$). Let us choose an $H\in \mathcal{H}$ with sufficiently large treewidth.
The core of the result is a combinatorial embedding result showing that a graph with $O(n)$ edges can be efficiently embedded in graphs with large treewidth. 

The notion of embedding we need is the following: it is a mapping $f$ from the vertices of $G$ to connected vertex sets of $H$ such that if two vertices $x$ and $y$ of $G$ are adjacent, then $f(x)$ and $f(y)$ intersect or have an edge between them (such an embedding is referred to as \emph{connected embedding} in this paper). The depth of the embedding is the maximum number of times each vertex of $H$ is used as an image of some vertex of $G$. The main combinatorial result of \cite{marx-toc-treewidth} shows that a bounded-degree graph $G$ with $n$ vertices has an embedding into $H$ of depth $O(n\log t/ t)$, where $t$ is the treewidth of $H$ (and this embedding can be found efficiently). The embedding is constructed via the following steps: (1) large treewidth implies the existence of a large set without balanced separators\footnote{Given a nonempty set $W$ of vertices, we say that a set $S$ of vertices is a balanced separator (with respect
to $W$) if $|W \cap C| \le |W|/2$ for every connected component $C$ of $G\setminus S$.}, (2) a set without balanced separators implies the existence of a large uniform concurrent flow, (3) the paths appearing in this large concurrent flow can be used to embed a blowup of the line graph of a complete graph, (4) the bounded-degree graph $G$ can be embedded into  this blowup of the line graph of a clique by a simple routing scheme.

Given such an embedding, we can reduce a 2-\CSP on $|\Sigma|=3$ and having constraint graph $G$ to a 2-\CSP on constraint graph $H$, but having larger alphabet $\Sigma'$. If the embedding maps vertex $v$ of $G$ to a subset of vertices $f(v)$ of $H$, then every variable in $f(v)$ ``simulates'' variable $v$ in the constructed instance. If the depth of the embedding is at most $d$, then each variable of the constructed instance needs to simulate at most $d$ original variables. Therefore, if the alphabet size of the new instance is $|\Sigma'|=|\Sigma|^d=3^d$, then each variable of $H$ can simulate the required number of original variables. As $f(v)$ is connected in $H$, we can introduce constraints that ensure that every variable in $f(v)$ expresses the same value for $v$. Furthermore, when $u$ and $v$ are adjacent in $G$, then $f(u)$ and $f(v)$ intersect or adjacent in $H$, so we can introduce a constraint to ensure that the  constraint $C_{uv}$ on $u$ and $v$ are respected. Thus the created instance is equivalent to the original one. By carefully following how the instance parameters change during the reduction, we can argue that the hypothetical algorithm on $H$ would solve the original 3-coloring instance in $2^{o(n)}$ time.

One technical difficulty in the proof is choosing $H\in \mathcal{H}$. To obtain $2^{o(n)}$ running time for \textsc{3-Coloring}, the choice of $H$ should depends on $n$: for larger $n$, we want to find an $H$ with larger treewidth in order to have a larger advantage compared to brute force. Thus one has to take into account the time required to enumerate graphs from $\mathcal{H}$. On the other hand, graph $H$ cannot be very large, otherwise the factor $f(k)$ in the running time could be too large. Thus the correct choice of $H$ creates an additional layer of difficulties in the proof.

\paragraph{Our Proof.} Our main observation is that if the goal is to obtain Theorem~\ref{thm:sparsemarx}, then it can be reached in a much simpler way than going through a general bound for treewidth and restricting it to the special case of expanders. In particular, the required embedding result is much simpler if the target graph is an expander. The key step is provided by Leighton and Rao~\cite[Theorem 22]{DBLP:journals/jacm/LeightonR99}, showing the following routing result: given disjoint pairs of vertices $(x_1,y_1)$, $\dots$, $(x_t,y_t)$ in a $k$-vertex expander graph, we can find in polynomial time $t$ paths $P_1$, $\dots$, $P_t$ such that $P_i$ has endpoints $(x_i,y_i)$ and every edge is used by $O(\log k)$ of the paths.

Given an $n$-vertex  bounded-degree graph $G$ and a $k$-vertex bounded-degree expander $H$, we can use this routing result to find an embedding of $G$ in $H$ having depth $O(n\log k/k)$. First, let $f$ be an arbitrary mapping from the vertices of $G$ to vertices of $H$ such that at most $n/k$ vertices are mapped to each vertex of $H$. This mapping can be used to define a ``demand multigraph'' $D$: it has the same vertex set as $H$, and for every edge $uv$ of $G$, we introduce a new edge $f(u)f(v)$ into $D$. As $G$ has bounded degree and $f$ maps at most $n/k$ vertices to each vertex of $H$, it follows that $D$ has maximum degree $O(n/k)$ and hence its edge set can be partitioned into $O(n/k)$ matchings. For each such matching, we can use the routing result of  Leighton and Rao~\cite[Theorem 22]{DBLP:journals/jacm/LeightonR99} to find paths connecting the endpoints.
This means that for every edge $uv$ of $G$, we find a path in $H$ connecting $f(u)$ and $f(v)$ in such a way that every edge of $H$ is used by $O(n\log k/k)$ of the paths. As $H$ has bounded degree, it also follows that every vertex is used by $O(n\log k/k)$ paths. Now we can define the embedding by letting $h(u)$ be the union of the paths corresponding to the edges of $G$ incident to $u$. It is clear that every vertex of $H$ is used by $O(n\log k/k)$ such sets and if $uv$ is an edge of $G$ then $h(u)$ and $h(v)$ intersect.


Another technical difference compared to the original proof of Marx~\cite{marx-toc-treewidth} is that the original proof started with the assumption that there is no $2^{o(n)}$ time algorithm for $n$-variable 3SAT. This is a somewhat weaker assumption than the ETH, which states that there is no $O(2^{\epsilon n})$ algorithm for some $\epsilon>0$. Using the ETH as the starting assumption and stating the result for a fixed $k$ as in Theorem~\ref{thm:main} allows a proof with significantly fewer technicalities.

\paragraph{Obstacles to obtain tight result}
It remains an important open question whether the $\log k$ factor in Theorem~\ref{thm:sparsemarx} can be reduced or even eliminated. One obvious approach for improvement would be to improve the embedding result and show that an $n$-vertex bounded-degree graph $G$ has an embedding with depth $o(n
\log k/k)$ into some $k$-vertex expander. Note that the proof of our embedding result starts with an arbitrary balanced mapping of vertices of $G$ to vertices of $H$, so there seems to be a lot of room for improvement for optimizing this scheme by a more careful grouping of vertices. However, by now, it is known that the $\log k$ factor in the embedding result cannot be improved in general.
\begin{theorem}[Alon and Marx~\cite{DBLP:journals/siamdm/AlonM11}]\label{th:alonmarx1}
Let $H$ be a 3-regular graph with $k$ vertices. Then, for all even
$n>n_0(k)$, there exists a 3-regular graph $G$ on $n$ vertices so that
any embedding  into $H$ is of depth at least $\Omega(\frac{n\log k}{k})$.
\end{theorem}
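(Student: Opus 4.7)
The plan is to prove the theorem by a pigeonhole/counting argument. I would show that for some absolute constant $c > 0$, all sufficiently large $k$, and all even $n > n_0(k)$, the number of labeled $3$-regular graphs $G$ on $[n]$ that admit a connected embedding of depth at most $d := \lfloor c n \log k / k\rfloor$ into the fixed $3$-regular graph $H$ is strictly smaller than the total number of labeled $3$-regular graphs on $[n]$. The latter count is $\exp\bigl(\tfrac{3n}{2}\log n - O(n)\bigr)$ by the configuration model, so a pigeonhole argument then produces a 3-regular $G$ in which every connected embedding has depth exceeding $d = \Omega(n \log k / k)$.

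To bound the count of embeddable $G$'s, I would count pairs $(G, f)$ where $f$ is a valid depth-$\le d$ embedding. First, count embeddings $f$ alone: since $H$ is $3$-regular, a standard estimate gives at most $k \cdot C^s$ connected vertex subsets of $V(H)$ of size $s$, for an absolute constant $C$. The depth bound forces $\sum_{v \in [n]} |f(v)| \le dk$, so a compositions count yields
\[
  |\{f\}| \;\le\; \exp\bigl(\,n \log k + O(dk) + O(n \log\log k)\,\bigr).
\]

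Second, fix $f$ and count compatible labeled 3-regular graphs $G$. Define the auxiliary graph $G_f$ on $[n]$ by $u \sim v$ iff $f(u) \cap f(v) \neq \emptyset$ or some edge of $H$ joins $f(u)$ to $f(v)$; every $G$ compatible with $f$ is a 3-regular subgraph of $G_f$. Writing $\mu(x) := |\{v : x \in f(v)\}| \le d$ (by the depth condition), a double count shows
\[
  |E(G_f)| \;\le\; \sum_{x \in V(H)}\binom{\mu(x)}{2} + \sum_{xy \in E(H)}\mu(x)\,\mu(y) \;\le\; 2\,k\,d^2,
\]
so the number of labeled 3-regular subgraphs of $G_f$ on $[n]$ is at most $\binom{2 k d^2}{3n/2} = \exp\bigl(\tfrac{3n}{2}\log(k d^2 / n) + O(n)\bigr)$.

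Multiplying the two bounds and substituting $d = c n \log k / k$, the log of the pair-count becomes $\tfrac{3n}{2}\log n - \bigl(\tfrac{1}{2} - O(c)\bigr)\,n \log k + O(n \log\log k) + O(n)$. For $c$ a sufficiently small absolute constant and $k \ge k_0$, this is strictly less than $\tfrac{3n}{2}\log n - O(n)$, completing the argument. The main obstacle I anticipate is the algebraic balancing of constants: the $\tfrac{3n}{2}\log k$ savings obtained by restricting $G$ to a subgraph of $G_f$ must strictly dominate both the $(1 + O(c))\,n \log k$ cost of specifying the images $f(v)$ and the lower-order $O(n \log\log k)$ slack from counting connected subsets of $V(H)$, and this calibration is exactly where the $\log k$ factor in the final lower bound materializes.
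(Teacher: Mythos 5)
The paper only quotes Theorem~\ref{th:alonmarx1} from Alon and Marx~\cite{DBLP:journals/siamdm/AlonM11} and contains no proof of it, so there is nothing internal to compare against; your first-moment counting argument is correct and is, as far as I can tell, essentially the argument of the cited source. The arithmetic balances exactly as you describe: specifying a depth-$d$ connected embedding costs $\exp\bigl(n\log k+O(dk)+O(n\log\log k)\bigr)$, while restricting $G$ to the $O(kd^2)$-edge compatibility graph $G_f$ saves a factor of roughly $\exp\bigl(\tfrac{3n}{2}\log k - O(n\log\log k)\bigr)$ against the $\exp\bigl(\tfrac{3n}{2}\log n - O(n)\bigr)$ total count of labeled cubic graphs, so the pair count falls short of the total once $c$ is a small enough absolute constant and $k\ge k_0$. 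The only loose end is that your argument assumes $k$ large; for bounded $k$ the trivial pigeonhole bound $\Delta(\Psi)\ge n/k$ already gives $\Omega(n\log k/k)$, so this restriction is harmless.
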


Thus any improvement of Theorem~\ref{thm:sparsemarx} should involve significantly different techniques, going beyond the simple notion of embedding we have here.

\subsection{Organization of Paper}
In Section~\ref{sec:prelim}
we introduce the problems and hypotheses of interest to this paper. In Section~\ref{sec:Embedding} we prove the main technical tool (a graph embedding theorem) needed to prove Theorems~\ref{thm:main}~and~\ref{thm:mainwithupperbound}. Finally, in Section~\ref{sec:main}, we prove Theorems~\ref{thm:main}~and~\ref{thm:mainwithupperbound}.

\section{Preliminaries}\label{sec:prelim}

In this section, we state a few definitions of relevance to the rest of the paper. 

\paragraph{\bf $3$-SAT.}
In the $3$-SAT problem, we are given a \cnf formula $\varphi$ over $n$ variables $x_1,\ldots ,x_n$, such that each clause contains at most $3$ literals. Our goal is to decide if there exist an assignment to $x_1,\ldots ,x_n$ which satisfies $\varphi$.

\begin{hypothesis}[Exponential Time Hypothesis (\ETH)~\cite{IP01,IPZ01,Tovey84}] \label{hyp:eth}
There exists $\epsilon > 0$ such that no algorithm can solve 3-\SAT on $n$ variables in time $O(2^{\epsilon n})$. Moreover, this holds even when restricted to formulae in which each variable appears in at most three clauses.
\end{hypothesis}

Note that the original version of the hypothesis from~\cite{IP01} does not enforce the requirement that each variable appears in at most three clauses. To arrive at the above formulation, we first apply the Sparsification Lemma of~\cite{IPZ01}, which implies that we can assume without loss of generality that the number of clauses $m$ is $O(n)$. We then apply Tovey's reduction~\cite{Tovey84} which produces a 3-\cnf instance with at most $3m + n = O(n)$ variables and every variable occurs in at most three clauses. This means that the bounded occurrence restriction is also without loss of generality.


\paragraph{\bf $2$-\CSP.}
An instance  $\Gamma$ of 2-\CSP  consists of a (constraint) graph $H$,   an alphabet set $\Sigma$, and, for each edge $\{u, v\} \in E(H)$, a constraint $C_{uv} \subseteq \Sigma \times \Sigma $. An \emph{assignment} for $\Gamma$  is a function $\sigma:V(H)\to\Sigma$. An edge $\{u, v\} \in E(H)$ is said to be \emph{satisfied} by an assignment $\sigma$ if and only if $(\sigma(u), \sigma(v)) \in C_{uv}$. The goal of the 2-\CSP is to determine if there is an assignment $\sigma$ that satisfies all the edges (i.e., constraints). Such an assignment is referred to as a satisfying assignment. 
A 2-\CSP admitting a satisfying assignment is said to be satisfiable.

\paragraph{Cheeger Constant.} Let $G$ be a graph. For any subset $S\subseteq V(G)$, the edge boundary or cut of $S$, denoted $\delta(S)$, is the  set of edges which have one endpoint in S and the other endpoint not in $S$, i.e., 
$$\forall S\subseteq V(G),\ \delta(S):=\{\{u,v\}\in E(G)\mid u\in S,v\notin S\}.$$

The Cheeger constant of $G$, denoted $\alpha(G)$, is defined by:
$$\alpha(G):=\underset{\substack{S\subseteq V(G)\\ |S|\le |V(G)|/2}}{\min}\left\{\frac{|\delta(S)|}{|S|}\right\}.$$

\paragraph{Expanders.} We will now discuss about the construction of a simple bipartite 3-regular expander $G$. Our starting is the following result of Alon \cite{alon2021explicit}.

\begin{theorem}[Alon \cite{alon2021explicit}]\label{thm:alonexpanders}
    There is an algorithm $\mathcal{A}$ such that the following holds for every degree $d\ge 3$, every $\varepsilon>0$ and all sufficiently large $n \ge 
n_0(d, \varepsilon)$, where $n\cdot d$ is even. $\mathcal{A}$ on input $n, d,$ and $\varepsilon$ ($n$ is provided in unary to $\mathcal{A}$) outputs  in polynomial time a $d$-regular simple\footnote{Recall that in a simple graph there are no self-loops and no multiple edges between a pair of vertices. } graph $G$ on $n$ vertices, such that the the absolute
value of every non-trivial eigenvalue of the  adjacency matrix of $G$ is at most  $2\cdot \sqrt{d-1}+\varepsilon$. 
\end{theorem}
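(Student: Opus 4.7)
The plan is to leverage the classical Ramanujan constructions of Lubotzky-Phillips-Sarnak, Margulis, and Morgenstern, which for selected degrees $d=q+1$ (with $q$ a prime power satisfying arithmetic conditions) yield explicit infinite families of $d$-regular simple graphs whose non-trivial eigenvalues have absolute value at most $2\sqrt{d-1}$. These families cover only particular degrees and only particular vertex counts, so the task is to extend them to all $d\ge 3$ and all sufficiently large $n$ with $nd$ even, at the cost of relaxing the spectral bound by the additive slack $\varepsilon$.

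For arbitrary \emph{degree}, I would use a lift-based construction in the spirit of Bilu-Linial. Starting from a small explicit $d$-regular base graph, one repeatedly takes signed $2$-lifts; the Marcus-Spielman-Srivastava interlacing polynomials theorem guarantees that a signing exists whose lift has new eigenvalues bounded by $2\sqrt{d-1}$ in absolute value, and this choice can be derandomized in polynomial time by exhausting over a small witness set. Iterating $O(\log n)$ times produces a $d$-regular graph of roughly the desired size. Alternatively, one can use the weaker but easier-to-derandomize near-Ramanujan bound of $O(\sqrt{d\log^3 d})$ from Bilu-Linial, which still falls within the $\varepsilon$ slack after a suitable degree-normalization.

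For arbitrary \emph{size} $n$, I would interpolate: take the largest construction of size $n'\le n$ from the infinite family, then add the missing $n-n'$ vertices via a local modification (for instance, vertex insertion on selected edges combined with short equality-like gadgets, or attachment of a bounded number of $d$-regular gadgets) that preserves $d$-regularity and simplicity while perturbing the spectrum by an amount that vanishes as $n\to\infty$. This $o(1)$ shift is absorbed into the slack $\varepsilon$ once $n_0(d,\varepsilon)$ is chosen large enough. Because $n$ is provided in unary, every step of the construction can be executed in $\poly(n)$ time.

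The main obstacle, and the reason this is a non-trivial theorem rather than a direct corollary of the classical algebraic constructions, is meeting three constraints simultaneously: \emph{exact} $d$-regularity and simplicity (no multi-edges or self-loops), a \emph{deterministic polynomial-time} algorithm for \emph{every} sufficiently large $n$, and a spectral bound within $\varepsilon$ of the Ramanujan threshold. The delicate technical step is quantifying the spectral cost of each combinatorial modification used to pass from the base family to the target pair $(d,n)$, and checking that simplicity can be maintained throughout while the cumulative spectral perturbation stays below $\varepsilon$. Alon's paper handles this through a careful choreography of algebraic constructions with combinatorial adjustments whose eigenvalue impact is explicitly bounded.
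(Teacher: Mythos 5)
There is nothing in the paper for your write-up to be compared against: Theorem~\ref{thm:alonexpanders} is imported as a black box from Alon~\cite{alon2021explicit}, and the only commentary the authors add is that the ``simple graph'' strengthening follows from the proof there (which in turn relies on~\cite{mohanty2020explicit}). Your high-level plan --- start from explicit (near-)Ramanujan families, extend to every degree, then extend to every admissible size by local surgery whose spectral cost is absorbed into $\varepsilon$ --- is indeed the shape of Alon's actual argument, so as a road map it is not off base. But as a proof it has concrete gaps in each of its three steps.

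First, the Bilu--Linial fallback does not work: a bound of order $\sqrt{d\log^3 d}$ exceeds $2\sqrt{d-1}$ by a multiplicative $\mathrm{polylog}(d)$ factor, and no ``degree normalization'' turns that into an additive $\varepsilon$ for a fixed degree $d$; the theorem genuinely needs the near-Ramanujan bound, which is why Alon leans on the Mohanty--O'Donnell--Paredes construction rather than on lifts with the weaker guarantee. Second, the MSS route is shakier than you suggest: the two-sided control of \emph{new} eigenvalues of a signed $2$-lift holds for bipartite graphs, and the known polynomial-time derandomization (Cohen) is likewise for bipartite Ramanujan lifts; for general $d$-regular graphs one only controls the largest new eigenvalue, so iterating $2$-lifts does not by itself give $|\lambda_i|\le 2\sqrt{d-1}+\varepsilon$ for all non-trivial $i$, and in any case it only reaches sizes of the form $n_0\cdot 2^i$. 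Third, and most importantly, the ``interpolate to every $n$'' step is where all the difficulty lives, and your justification --- that a local modification perturbs the spectrum by $o(1)$ --- is false as stated: editing a bounded number of edges is a bounded-rank perturbation, which controls how many eigenvalues can leave an interval but not where they land, and a small attached gadget can carry localized eigenvectors with eigenvalues unrelated to the original spectrum. Making this step work requires the specific quantitative surgery lemmas of~\cite{alon2021explicit} (and, for the version stated here, the additional care needed to keep the graph simple), which your proposal acknowledges but does not supply.
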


The above is a strengthening of Theorem 1.3 in \cite{alon2021explicit} (as the expander is simple in the above statement), but this follows from the proof in \cite{alon2021explicit} (which relies on \cite{mohanty2020explicit}).
Alon communicated to us \cite{alon}
 that the above result can be used to prove the existence of 3-regular simple balanced bipartite expanders of every even order (greater than 4).
 
\begin{theorem}[Alon \cite{alon}]\label{thm:alon bipartite}
    There is a universal constant $\alpha_0>0$ and an algorithm $\mathcal{A}$ such that the following holds. $\mathcal{A}$ on input an even integer $n\ge 6$  ($n$ is provided in unary to $\mathcal{A}$) outputs  in polynomial time  a $3$-regular simple balanced bipartite graph $G$ on $n$ vertices such that $\alpha(G)\ge \alpha_0$. 
\end{theorem}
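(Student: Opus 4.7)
The plan is to realize the required graph as the bipartite double cover of a $3$-regular simple expander on $n/2$ vertices supplied by Theorem~\ref{thm:alonexpanders}. Fix $\varepsilon>0$ small enough that $2\sqrt{2}+\varepsilon<3$, say $\varepsilon = 1/10$. For every even $n$ exceeding $2 n_0(3,\varepsilon)$, I would invoke Theorem~\ref{thm:alonexpanders} with $d = 3$ to obtain, in polynomial time, a $3$-regular simple graph $G$ on $n/2$ vertices whose non-trivial adjacency eigenvalues lie in $[-(2\sqrt{2}+\varepsilon),\,2\sqrt{2}+\varepsilon]$. I then construct the bipartite double cover $B$: its vertex set is $V(G)\times\{0,1\}$, and $(u,0)$ is adjacent to $(v,1)$ exactly when $uv \in E(G)$. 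It is immediate that $B$ is $3$-regular, balanced bipartite with parts of size $n/2$, and simple (simplicity of $G$ precludes loops and parallel edges in $B$). For the finitely many even $n$ below the threshold, I would hardcode explicit examples (for instance $K_{3,3}$ for $n = 6$), which is harmless for a universal constant.

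Next I would lower bound $\alpha(B)$ via the spectral Cheeger inequality. The adjacency matrix of $B$ has block form $\left(\begin{smallmatrix} 0 & A \\ A & 0 \end{smallmatrix}\right)$, where $A$ is the adjacency matrix of $G$, so the eigenvalues of $B$ are precisely $\pm\mu$ for each eigenvalue $\mu$ of $A$. Hence the top eigenvalue of $B$ is $3$, while every other eigenvalue has absolute value at most $2\sqrt{2}+\varepsilon$; in particular the second-largest eigenvalue of $B$ is at most $2\sqrt{2}+\varepsilon$. Because $-3$ is not an eigenvalue of $A$ (otherwise the spectral bound would be violated), $G$ is not bipartite; combined with the connectivity of $G$ (which follows from the positive spectral gap), this yields connectivity of $B$. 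The discrete Cheeger inequality for $d$-regular graphs then gives
\[
\alpha(B) \;\ge\; \frac{3 - (2\sqrt{2}+\varepsilon)}{2},
\]
a positive absolute constant which I would declare to be $\alpha_0$.

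The overall algorithm runs in polynomial time, since Alon's construction is polynomial and forming the double cover takes only linear additional time. The only step that demands more than routine verification is the connectivity of $B$, which reduces to observing that the spectral gap of $G$ forces $G$ to be non-bipartite, so that the pair $(+3,-3)$ does not split into two trivial components of the cover. This minor spectral argument, together with deciding how small $\varepsilon$ needs to be to guarantee that the resulting $\alpha_0$ is independent of $n$, is the main (though still mild) obstacle in the proof.
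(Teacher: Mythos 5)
Your construction works only when $n$ is divisible by $4$, and this leaves a genuine gap covering half of the inputs the theorem must handle. To form the bipartite double cover of a $3$-regular graph on $n/2$ vertices, such a graph must exist in the first place; by the handshake lemma a $3$-regular graph on an odd number of vertices is impossible, and indeed Theorem~\ref{thm:alonexpanders} explicitly requires the product of the degree and the number of vertices to be even. So when $n \equiv 2 \pmod 4$ (e.g.\ $n = 14$), the quantity $n/2$ is odd and your invocation of Theorem~\ref{thm:alonexpanders} with $d=3$ on $n/2$ vertices is not legitimate — there is simply no graph to double-cover. Since the theorem promises a construction for \emph{every} even $n \ge 6$, this case cannot be dismissed.

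The paper closes exactly this gap with a surgery argument that occupies most of its proof: for $n \equiv 2 \pmod 4$ it first builds $G_{n+2}$ (where $n+2$ is divisible by $4$) via the double cover, then deletes two adjacent vertices $u,v$ chosen from an edge lying on a shortest odd cycle of the base expander, and rewires their remaining neighbors by two new edges so as to preserve $3$-regularity, simplicity, and bipartiteness. It then shows by a case analysis on how a cut interacts with the four rewired vertices that the Cheeger constant drops by at most a constant factor (from $0.075$ to $0.015$). Your spectral double-cover argument for $n \equiv 0 \pmod 4$ and your handling of small $n$ by finitely many explicit examples both match the paper, but without an argument for $n \equiv 2 \pmod 4$ — either the paper's surgery or some substitute — the proof is incomplete.
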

\begin{proof}
Let $\mathcal{A}'$ be the algorithm given by Theorem~\ref{thm:alonexpanders} and let $n_0\in\mathbb{N}$ be the integer guaranteed in that theorem statement for $d:=3$ and $\varepsilon:=2.85-2\sqrt{2}$. The $\alpha_0$ in this theorem statement is then set to $\min(1/n_0,0.015)$.

If the input $n$ to $\mathcal{A}$ is less than $\max(2n_0,12)$ then simply output the following graph\footnote{Any arbitrary connected balanced bipartite graph $G_n$ on $n$ vertices would suffice for us.} on $n$ vertices: $\forall i\in [n/2]$, and $\forall t\in \{-1,0,1\}$, we have $(i,((i+t)\mod (n/2))+\frac{n}{2}+1)\in E(G_n)$. Note that $G_n$
 is connected, and thus for all non-empty $S\subseteq V(G_n)$ such that $|S|\le n/2$, we have that there is an edge in $\delta(S)$. Thus, we have:
 $$\alpha(G_n)\ge \left( \underset{\substack{S\subseteq V(G_n)\\ |S|\le n/2}}{\min}\ \frac{1}{|S|}\right)\ge \frac{2}{n}\ge \frac{1}{n_0}.$$

    If the input $n$ to $\mathcal{A}$ is at least   $\max(2n_0,12)$ and $n$ is divisible by 4, then let $G_n^0$ be the output of $\mathcal{A}'$ on input $n/2, d=3, \varepsilon:=2.85-2\sqrt{2}$ (we can invoke $\mathcal{A}'$ on input $n/2$ as $n/2$ is even). Let $G_n$ be the double cover of $G_n^0$, i.e., if $A$ is the adjacency matrix of $G_n^0$ then the adjacency matrix of $G_n$ is $\begin{bmatrix}
\textbf{0} & A  \\
A & \textbf{0}  
\end{bmatrix}$. Note that $G_n^0$ is a 3-regular graph on $n/2$ vertices and $G_n$ is a 3-regular simple balanced bipartite graph on $n$ vertices. Moreover, if $-2.85\le \lambda_{\frac{n}{2}}^0\le \lambda_{\frac{n}{2}-1}^0\le \cdots \le \lambda_2^0 \le 2.85 <\lambda_1^0=3$ were the eigenvalues of $G_n^0$ then the set of eigenvalues of $G_n$ is $\{(-1)^a\cdot \lambda_j^0\mid j\in [n/2],\ a\in\{0,1\}\}$. Thus the second highest eigenvalue of $G_n$ is at most 2.85. From Cheeger inequality \cite{cheeger1970lower,alon1986eigenvalues,alon1985lambda1}, we have:
$$\alpha(G_n)\ge (3-2.85)/2=0.075.$$

If the input $n$ to $\mathcal{A}$ is at least  $\max(2n_0,12)$ and $n$ is not divisible by 4 (i.e., $n\cong 2 \mod 4$), then let $G_{n+2}$ be the output of $\mathcal{A}$ on input $n+2$.  Note that  $G_{n+2}$ is a 3-regular balanced bipartite simple graph on $n+2$ vertices  with $\alpha(G_{n+2})\ge  0.075$.

Let $\hat{e}:=\{\hat{u},\hat{v}\}$ be an   edge in $G_{n+2}^0$ in (one of) the odd  cycle (denoted by $C$) of minimum size   (recall that $G_{n+2}^0$ is not bipartite as it's smallest eigenvalue is strictly greater than $-3$).   Let $e:=(u,v)$ be (one of the two) corresponding edges of $\hat{e}$ in $G_{n+2}$. Let $v,v_1,$ and $v_2$ (resp.\ $u,u_1,$ and $u_2$) be the neighbors of $u$ (resp.\ $v$) in $G_{n+2}$. Moreover, we may assume that (by relabeling $v_1$ by $v_2$ if needed) $(u_i,v_i)\notin E(G_{n+2})$ for all $i\in\{1,2\}$. This is because if $(u_1,v_2)\in E(G_{n+2})$ and $(u_1,v_1)\in E(G_{n+2})$ then we have a cycle of length 4 in $C$ which contradicts the minimaility of size of $C$.  We remove $u$ and $v$ (and the edges incident  to these two vertices), and insert an edge between $u_1$ and $v_1$, and between $u_2$ and $v_2$, to obtain $G_n$, a 3-regular \emph{simple} balanced bipartite graph on $n$ vertices. 

Let $S\subseteq V(G_n)$ such that $|S|\le |V(G_n)|/2$ and $\alpha(G_n)=\frac{|\delta(S)|}{|S|}$. Let $T:=S\ \cap\ \{u_1,u_2,v_1,v_2\}$. 
 Suppose $S=T$, then we note that $\delta(T)>0$ in $G_n$,  as otherwise, we have that the set of vertices $\{u,v,u_1,v_1,u_2,v_2\}$ is disconnected from the rest of the graph in $G_{n+2}$, and this implies $\alpha(G_{n+2})=0$ (as $n\ge 12$). But if $\delta(T)>0$ then we have: 
 $$\alpha(G_n)=\frac{|\delta(S)|}{|S|}=\frac{|\delta(T)|}{|T|}\ge \frac{1}{4}.$$

On the other hand, suppose $S\neq T$, then let $\tilde{S}:=S\setminus T$. First note that $\delta(S)$ in $G_n$ is not empty as then we have that either $\delta(S)$ or $\delta(S\cup\{u,v\})$ is empty in $G_{n+2}$. Let $F_{\tilde S}\subseteq E(G_n)$ (resp.\ $F_T\subseteq E(G_n)$) be the set of edges with one endpoint in $\tilde S$ (resp.\ $T$) and another endpoint in $V(G_n)\setminus S$. Since there are no edges between the set $\tilde S$ and $\{u,v\}$ in $G_{n+2}$, we have that $F_{\tilde S}$ is also the set of edges with one endpoint in $\tilde S$ and another endpoint in $V(G_{n+2})\setminus S$.  On the other hand, let $\tilde {F}_{T}\subseteq E(G_{n+2})$  be the set of edges with one endpoint in $T$   and another endpoint in $V(G_{n+2})\setminus S$, then we know that $0\le |\tilde F_T|-|F_T|\le 4$. Also since $\delta(S)$ in $G_n$ is not empty, we have that $|F_{\tilde S}|+|F_{T}|\ge 1$.
Thus, we have: 
$$|\tilde F_T|+|F_{\tilde S}|\le 4+|F_T|+|F_{\tilde S}|\le 4\cdot(|F_T|+|F_{\tilde S}|)+|F_T|+|F_{\tilde S}|= 5(|F_{\tilde S}|+|F_T|).$$
This implies:
$$\alpha(G_n)=\frac{|\delta_n(S)|}{|S|}\ge \frac{|F_{\tilde S}|+|F_{T}|}{|S|}\ge \frac{|F_{\tilde S}|+|\tilde{F}_{T}|}{5|S|}\ge \frac{\alpha(G_{n+2})}{5}\ge 0.015.$$

Finally, we note that $\mathcal{A}$ runs in polynomial time because $\mathcal{A}'$ runs in polynomial time.
\end{proof}

\section{Proof of Main Embedding Theorem}\label{sec:Embedding}

In this section, we prove a graph embedding theorem which will be used to prove Theorems~\ref{thm:main}~and~\ref{thm:mainwithupperbound} in the next section. First we introduce a notion of graph embedding below. 

Two connected subgraphs $H_1$ and $H_2$ of a graph $H$ \emph{touch}
if they share a vertex or if there is an edge of $H$ with one endpoint in $V(H_1)$
and another endpoint in $V(H_2)$. 
A \emph{connected embedding} of a graph $G$ in a graph $H$ is a function $\Psi:V(G)\to \P(V(H))$
that maps every $v \in V(G)$ to a nonempty connected subgraph in $H$ 
such that for every edge $uv \in E(G)$, the subgraphs $\Psi(u)$ and $\Psi(v)$ touch. 
The \emph{depth} of an embedding $\Psi$, denoted $\Delta(\Psi)$,
is: 
$$\Delta(\Psi):=\max_{x \in V(H)} |\{v \in V(G)~|~x \in \Psi(v)\}|,$$ that is, the maximum 
number of subgraphs $\Psi(v)$ that meet in a single vertex of $H$. 
In literature, such embedding also appear in the context of \emph{congested minor models}
where depth is called \emph{congestion} or \emph{ply}. 

The main embedding result of this section is the following. 

\begin{theorem}\label{thm:embedding}
There exists a constant $Z\ge 1$  and an algorithm $\mathcal{A}$ that takes as input  a graph $G$  and an even integer $k\ge 6$, and outputs a  bipartite 3-regular simple graph $H$ with no isolated vertices  and a connected embedding $\Psi:V(G)\to\P(V(H))$ such that the following holds.
\begin{description}
    \item[Size:] $|V(H)|\le k$. 
    \item[Depth Guarantee:]  $\Delta(\Psi)\le  Z\cdot \left(1+\frac{|V(G)|+|E(G)|}{k}\right) \cdot \log k$.
    \item[Runtime:] $\mathcal{A}$ runs in time $(|V(G)|+|E(G)|)^{O(1)}$.
\end{description}
\end{theorem}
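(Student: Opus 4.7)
The plan is to build $H$ as a $3$-regular simple bipartite expander on $k$ vertices using Theorem~\ref{thm:alon bipartite}, and then realise each edge of $G$ as a short path in $H$ via the Leighton--Rao routing theorem. Bundling these paths at their common endpoints will produce the connected subgraphs $\Psi(v)$.

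Write $N:=|V(G)|+|E(G)|$. First, I would invoke Theorem~\ref{thm:alon bipartite} to obtain, in polynomial time, a $3$-regular simple balanced bipartite graph $H$ on $k$ vertices with Cheeger constant $\alpha(H)\ge \alpha_0$. Next, I would greedily construct a \emph{weighted} mapping $f:V(G)\to V(H)$ that packs the $G$-vertices into the $k$ bins of $H$, each $v$ carrying weight $1+\deg_G(v)$; since the total weight is $n+2m\le 2N$, a first-fit scheme achieves
\[
|f^{-1}(x)|\;+\;\sum_{v\in f^{-1}(x)}\deg_G(v)\;=\;O\!\left(1+\tfrac{N}{k}\right)\qquad\text{for every }x\in V(H).
\]
Then I form the \emph{demand multigraph} $D$ on $V(H)$, where each edge $uv\in E(G)$ becomes a (possibly parallel) edge $f(u)f(v)$ in $D$. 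The weight bound above gives $\Delta(D)=O(1+N/k)$, so Shannon's edge-colouring theorem for multigraphs partitions $E(D)$ into $\Delta^{\ast}=O(1+N/k)$ matchings $M_1,\dots,M_{\Delta^{\ast}}$ in polynomial time. For each $M_i$ I apply the Leighton--Rao routing result~\cite[Thm.~22]{DBLP:journals/jacm/LeightonR99} inside $H$: since $H$ has constant Cheeger constant, the matched pairs can be connected by paths in $H$ of edge congestion $O(\log k)$. Summing over all $\Delta^{\ast}$ matchings, every edge---and hence, by $3$-regularity, every vertex---of $H$ is used by $O((1+N/k)\log k)$ of the resulting paths $\{P_e\}_{e\in E(G)}$.

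Finally, I set $\Psi(u):=\{f(u)\}\cup\bigcup_{e\ni u}V(P_e)$. Each $P_{uv}$ passes through $f(u)$, so $\Psi(u)$ is a connected subgraph of $H$; and for every $uv\in E(G)$ the path $P_{uv}$ lies in $\Psi(u)\cap\Psi(v)$, so the images share a vertex and therefore touch. To bound $\Delta(\Psi)$ at a fixed $x\in V(H)$, observe that
\[
|\{u\in V(G):x\in\Psi(u)\}|\;\le\;|f^{-1}(x)|\;+\;2\cdot\bigl|\{e\in E(G):x\in V(P_e)\}\bigr|,
\]
since each edge $e$ with $x\in P_e$ contributes its two endpoints. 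Both terms on the right are $O((1+N/k)\log k)$ by the constructions above, so the depth bound follows with an absolute constant $Z$.

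The main technical obstacle, beyond assembling the pieces, is handling graphs $G$ with wildly varying vertex degrees: a uniform balanced mapping (say $\lceil n/k\rceil$ vertices per bin) would yield $\Delta(D)=\Theta(\Delta(G)\cdot n/k)$, which can far exceed $N/k$. The degree-weighted bin-packing above exactly absorbs this by charging each $v$ its own $\deg_G(v)$, so that the congestion of $D$ is controlled by the total edge count rather than by any individual degree. The remaining ingredients---the polynomial runtime of the Leighton--Rao routing and of the multigraph edge-colouring, and corner cases such as $N<k$ (handled by singleton groups) or $k$ close to $6$ (handled by Theorem~\ref{thm:alon bipartite})---are routine.
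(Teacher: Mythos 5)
Your proposal is correct and follows essentially the same route as the paper's proof: build the $3$-regular bipartite expander $H$ via Theorem~\ref{thm:alon bipartite}, map $V(G)$ into $V(H)$ in a balanced way, form the demand multigraph, decompose it into $O(1+N/k)$ matchings, route each matching with Leighton--Rao at edge-congestion $O(\log k)$, and let $\Psi(v)$ be the union of the paths of the edges incident to $v$. The only (welcome) refinement is your degree-weighted first-fit binning, which handles graphs of irregular degree; the paper's written proof uses a uniform balanced map and implicitly assumes $G$ is $d$-regular, which suffices for its later applications.
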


Before, we proceed with the proof of the above theorem, we first state a key result of Leighton and Rao \cite{DBLP:journals/jacm/LeightonR99} on expander routing that will be used in our proof. 

\begin{theorem}[Leighton and Rao \cite{DBLP:journals/jacm/LeightonR99}]\label{thm:congestion}
   There exists  an algorithm $\mathcal{A}$ that takes as input a $k$-node bounded degree graph $G$,  and a $k$-node bounded degree graph $H$
with Cheeger constant $\alpha:=\alpha(H)$, and a bijection $h:V(G)\to V(H)$, and outputs  in deterministic polynomial time,   a path $P_{uv}$ for each edge $uv$ of $G$ such that the following holds.
\begin{enumerate}
\item Path $P_{uv}$ is a path between $h(u)$ and $h(v)$ in $H$,
\item Each edge of $H$ is in at most $O(\alpha^{-1}\log k)$ of the paths,
\item Path $P_{uv}$ has length at most $O(\alpha^{-1}\log k)$.
\end{enumerate}
\end{theorem}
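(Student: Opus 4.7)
The plan is to reduce to a bounded-degree source graph, build the host $H$ as a bipartite expander via Theorem~\ref{thm:alon bipartite}, balance-assign vertices of $G$ to $V(H)$, decompose the induced demand multigraph into few matchings, route each matching in $H$ via Theorem~\ref{thm:congestion}, and assemble the paths into the embedding. First I replace $G$ by an auxiliary graph $G^\star$ of maximum degree at most $3$: each vertex $v\in V(G)$ of degree $d_v\ge 1$ is split into a path $v_1,\ldots,v_{d_v}$, with the original edges incident to $v$ redistributed so that each copy is an endpoint of exactly one of them, and isolated vertices are kept as they are. Then $|V(G^\star)|+|E(G^\star)|=O(|V(G)|+|E(G)|)$, and given any connected embedding $\Psi^\star$ of $G^\star$ in $H$ of depth $D^\star$, setting $\Psi(v):=\bigcup_i\Psi^\star(v_i)$ yields a connected embedding of $G$ of depth at most $D^\star$: the count of $v\in V(G)$ with $x\in\Psi(v)$ never exceeds the count of $v'\in V(G^\star)$ with $x\in\Psi^\star(v')$.

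Next, I apply Theorem~\ref{thm:alon bipartite} to obtain a $3$-regular simple balanced bipartite graph $H$ on $k$ vertices with $\alpha(H)\ge\alpha_0$, and fix any balanced mapping $f:V(G^\star)\to V(H)$ with $|f^{-1}(x)|\le\lceil |V(G^\star)|/k\rceil$. Let $D$ be the multigraph on $V(H)$ whose edges are $\{f(u),f(v)\}$, one per edge $uv\in E(G^\star)$. Because $\Delta(G^\star)\le 3$, the $D$-degree at any $x\in V(H)$ is at most $3\lceil |V(G^\star)|/k\rceil = O\!\bigl(1+(|V(G)|+|E(G)|)/k\bigr)$, so standard greedy edge coloring partitions $E(D)$ into $t=O\!\bigl(1+(|V(G)|+|E(G)|)/k\bigr)$ matchings $M_1,\ldots,M_t$ in polynomial time. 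For each $M_i$, viewed as a graph on $V(H)$ with the identity bijection, Theorem~\ref{thm:congestion} yields paths $\{P_e\}_{e\in M_i}$ in $H$ with edge-congestion $O(\alpha_0^{-1}\log k)=O(\log k)$. Summing over the $t$ matchings, every edge of $H$ is used by at most $O\!\bigl((1+(|V(G)|+|E(G)|)/k)\log k\bigr)$ of the paths, and since $H$ is $3$-regular the same bound holds for vertex-congestion.

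Define $\Psi^\star(u):=\{f(u)\}\cup\bigcup_{e\in E(G^\star):\,u\in e}V(P_e)$; this set is connected in $H$ because every $P_e$ incident to $u$ has $f(u)$ as an endpoint, and for $uv\in E(G^\star)$ the path $P_{uv}$ lies inside both $\Psi^\star(u)$ and $\Psi^\star(v)$, so they share a vertex and touch. Converting back via the first step gives the desired $\Psi$. For depth, an $x\in V(H)$ lies in $\Psi(v)$ only if either $x=f(v_i)$ for some copy $v_i$ of $v$, contributing at most $\lceil |V(G^\star)|/k\rceil=O(1+(|V(G)|+|E(G)|)/k)$ across choices of $v$, or else $x$ belongs to $V(P_e)$ for an edge $e$ of $G^\star$ incident to a copy of $v$, which is bounded by the vertex-congestion at $x$. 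Adding these contributions yields $\Delta(\Psi)\le Z\cdot(1+(|V(G)|+|E(G)|)/k)\log k$ for a suitable constant $Z$, and the combined use of Theorem~\ref{thm:alon bipartite}, greedy edge coloring, and Theorem~\ref{thm:congestion} runs in polynomial time overall.

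The main obstacle I anticipate is that Theorem~\ref{thm:congestion} demands a bounded-degree source graph, while the demand multigraph $D$ is not bounded-degree unless $G$ itself already is; plain vertex-count balancing of $f$ cannot compensate for a few high-degree vertices of $G$. The vertex-splitting reduction in the first step removes precisely this obstacle so that $\Delta(D)$ becomes controlled by the vertex load of $f$ alone, yielding the factor $(|V(G)|+|E(G)|)/k$; the remaining $\log k$ factor then comes directly from the Leighton–Rao congestion bound on the expander.
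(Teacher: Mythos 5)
Your proposal does not prove the statement in question. The statement is the Leighton--Rao expander-routing theorem itself: given a bounded-degree demand graph $G$ on $k$ nodes and a bounded-degree host $H$ with Cheeger constant $\alpha$, one must \emph{construct}, in deterministic polynomial time, paths connecting the prescribed endpoint pairs so that every edge of $H$ carries only $O(\alpha^{-1}\log k)$ paths, each of length $O(\alpha^{-1}\log k)$. Your write-up instead takes this routing theorem as a black box --- you explicitly invoke Theorem~\ref{thm:congestion} to route each matching --- and uses it to build a connected embedding of an arbitrary graph $G$ into a $3$-regular bipartite expander. That is an argument for Theorem~\ref{thm:embedding} (and indeed it closely mirrors the paper's proof of that theorem, with a reasonable extra vertex-splitting step to handle unbounded degrees in $G$), but as a proof of Theorem~\ref{thm:congestion} it is circular: the key quantitative content, namely why congestion $O(\alpha^{-1}\log k)$ and length $O(\alpha^{-1}\log k)$ are achievable at all, is never established.

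To actually prove the routing statement you would need the machinery that Leighton and Rao develop (and that this paper deliberately imports as a citation rather than reproving): an approximate max-flow/min-cut theorem for uniform multicommodity flow, showing that in a graph with Cheeger constant $\alpha$ one can fractionally route all demands of a bounded-degree demand graph with edge-congestion $O(\alpha^{-1}\log k)$ and bounded path length, followed by a deterministic rounding of the fractional flow into integral paths without losing more than a constant factor. None of the ingredients of your proposal (balanced assignment, matching decomposition of the demand multigraph, union of incident paths) touches this; they all presuppose that the paths with the claimed congestion already exist. So the gap is not a fixable detail but a misidentification of the target: what you proved is the downstream embedding result, not the cited routing theorem it rests on.
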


\begin{proof}[Proof of Theorem~\ref{thm:embedding}]
Let $\mathcal{A}'$ (resp.\ $\mathcal{A}''$) be the algorithm given by Theorem~\ref{thm:congestion} (resp.\ Theorem~\ref{thm:alon bipartite}). Given as input  a $d$-regular graph $G$   and an even integer $k\ge 6$, the algorithm $\mathcal{A}$ does the following.   

$\mathcal{A}$ first provides $k$ as input to $\mathcal{A}''$. Let $H$ be the balanced bipartite 3-regular expander $H$ on $k$ vertices with Cheeger constant bounded away from $\alpha_0$ (some positive universal constant) that is given as output.

$\mathcal{A}$ then fixes an arbitrary function $g:V(G)\to[k]$  such that for all  $j\in[k]$, we have $|\{v\in V(G)\mid g(v)=j\}|\le \lceil |V(G)|/k\rceil$, i.e.,  $g$ is balanced.

Consider the following multigraph (\emph{demand graph}) $D$ on $k$ vertices, where for every $\{u,v\}\in  E(G)$, if $g(u)\neq g(v)$, then there is a unique edge in $E(D)$ between $g(u)$ and $g(v)$. In $D$, there are at most $\frac{|V(G)|d}{k}$ edges incident on any vertex.  Thus from \cite{de1972decomposition} $E(D)$ admits a decomposition into matchings, where the size of each matching is at least $\frac{k}{2}-1$. For each such matching, say $\{(x_1,y_1),\ldots ,(x_{\ell},y_{\ell})\}$, where $\ell\ge \frac{k}{2}-1$, we construct  the degree 1 graph $\hat{G}$ given by the union of the $\ell$ edges of the matching.

$\mathcal{A}'$ on input $\hat{G}$ and $H$ (and some canonical bijection between $V(\hat{G})$ and $V(H)$) 
outputs $\ell$ paths in $H$ whose congestion is at most $O(\log k)$ (since the Cheeger constant of $H$ is bounded below by a positive constant). We do the above for each matching (a different $\hat{G}$ for each matching) and using the same graph $H$ as input to $\mathcal{A}'$ to obtain a collection of paths. More importantly, there is a 1-to-1 correspondence between the edges of $G$ and this collection of paths. 

The final connected embedding $\Psi$ is given as follows: for each $v\in V(G)$, we have  $d$ many edges incident on it and each edge corresponds to a path of length at most $O(\log k)$. Then $\Psi(v)$ is the union of the nodes in all the $d$ paths corresponding to the $d$ edges incident of $v$. 

The runtime claim is straightforward. To see the claim on the depth of the embedding, notice that the congestion on each node of $H$ in each iteration of the calling of $\mathcal{A}'$ is at most $O(\log k)$. We make $O(|V(G)|d/k)$ many calls and thus the total congestion is $O(\frac{|V(G)|d\log k}{k})$.

Finally, if $|V_1|\neq |V_2|$, the only difference, would be in the selection of $g_1$ and $g_2$, where instead of the range being $[k/2]$ for both, it would instead need to be $k|V_1|/|V|$ and $k|V_2|/|V|$  respectively.
\end{proof}


\section{ETH Lower Bound for 2-CSP}\label{sec:main}

In  this section we first prove a bridging theorem to connect the embedding result from previous section with 2-\CSP{}s, and then in the subsequent subsection prove Theorems~\ref{thm:main}~and~\ref{thm:mainwithupperbound}.

\subsection{Embedding 2-\CSP in to a smaller constraint graph}

The below theorem is essentially proved in \cite{marx-toc-treewidth} but we provide a proof here for the sake of completeness. 

\begin{theorem}\label{thm:embedCSP}
\begin{sloppypar}There is an algorithm $\mathcal{A}$ that takes as input  a 2-\CSP instance $\Gamma(G(V,E),\Sigma,\{C_{uv}\}_{\{u,v\}\in E})$, a simple graph $H(X,F)$ without isolated vertices, and a connected embedding $\Psi: V\to \P(X)$ of depth $d$, and outputs a 2-\CSP instance $\Phi(H(X,F),\Sigma^d,\{\tilde{C}_{xy}\}_{\{x,y\}\in F})$ such that the following holds.\end{sloppypar}
\begin{description}
    \item[Reduction Guarantee:]  $\Gamma$ is satisfiable if and only if $\Phi$ is satisfiable.
    \item[Runtime:] $\mathcal{A}$ runs in time $O(|E|\cdot |F|\cdot |\Sigma|^{2d+2})$.
\end{description}
\end{theorem}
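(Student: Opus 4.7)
The construction is the classical ``variable simulation'' used in embedding-based parameterized reductions. For each $x \in X$ set $V_x := \{v \in V : x \in \Psi(v)\}$; the depth bound gives $|V_x| \le d$, so (after fixing an arbitrary injection $V_x \hookrightarrow [d]$) I identify an assignment from $\Sigma^d$ to $x$ with a function $f_x : V_x \to \Sigma$, the remaining coordinates being treated as dummies. The intended meaning is that $x$ stores the values of all variables of $\Gamma$ whose image under $\Psi$ passes through $x$.

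\textbf{Building the constraints.} The relation $\tilde C_{xy}$ on each $\{x,y\} \in F$ is the conjunction of two types of clauses. \emph{Consistency:} for every $v \in V_x \cap V_y$, demand $f_x(v) = f_y(v)$. \emph{Original constraints:} for every $\{u,v\} \in E$, use the ``touch'' hypothesis to pick a single $F$-edge that will carry $C_{uv}$. If $\Psi(u)$ and $\Psi(v)$ share a vertex $x$, pick any neighbor $y$ of $x$ in $H$ (which exists because $H$ has no isolated vertex) and add the clause ``$(f_x(u), f_x(v)) \in C_{uv}$'' to $\tilde C_{xy}$; otherwise take an edge $\{x,y\} \in F$ with $x \in \Psi(u)$ and $y \in \Psi(v)$, and add the clause ``$(f_x(u), f_y(v)) \in C_{uv}$'' to $\tilde C_{xy}$. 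Then $\tilde C_{xy}$ is the set of pairs $(f_x, f_y)$ satisfying every clause attached to $\{x,y\}$.

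\textbf{Correctness and runtime.} The forward direction is immediate: a satisfying $\sigma$ of $\Gamma$ yields $\tilde \sigma(x) := \sigma|_{V_x}$, making every consistency clause trivial and reducing each inserted clause to $(\sigma(u), \sigma(v)) \in C_{uv}$, which holds. The reverse direction is the only subtle step: given $\tilde\sigma$ satisfying $\Phi$, define $\sigma(v) := (\tilde\sigma(x))(v)$ for an arbitrary $x \in \Psi(v)$. This is well-defined precisely because $\Psi(v)$ is a \emph{connected} subgraph of $H$, so the pairwise consistency clauses along edges inside $\Psi(v)$ chain into a uniform value for $v$ across all of $\Psi(v)$; the $C_{uv}$ clauses then certify that $\sigma$ satisfies $\Gamma$. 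For the runtime, each $\tilde C_{xy}$ is built by scanning the $|\Sigma|^{2d}$ candidate pairs and, for each, verifying at most $d$ consistency conditions and at most $|E|$ assigned CSP clauses (each testable in $O(|\Sigma|^2)$ time using a table representation of $C_{uv}$); summing over the $|F|$ edges yields $O(|E| \cdot |F| \cdot |\Sigma|^{2d+2})$, matching the claimed bound. The only genuine obstacle is arguing well-definedness of $\sigma$ in the reverse direction, which is exactly where connectedness of each $\Psi(v)$ is used; everything else is bookkeeping, and the ``no isolated vertex'' assumption on $H$ is invoked only to ensure the shared-vertex case of the clause-assignment step can always pick an actual $F$-edge.
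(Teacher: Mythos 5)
Your proposal is correct and follows essentially the same construction as the paper: each $x\in X$ simulates the at most $d$ variables whose images under $\Psi$ contain it, equality (vertex-consistency) clauses on $F$-edges combine with connectedness of each $\Psi(v)$ to make the decoded assignment well-defined, and the no-isolated-vertex hypothesis is used in exactly the same place. The only cosmetic difference is that you assign each original constraint $C_{uv}$ to a single carrying edge of $H$, whereas the paper enforces it on every applicable edge of $F$; both yield the same correctness argument and the same runtime bound.
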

\begin{proof}
Given a 2-\CSP instance $\Gamma(G(V,E),\Sigma,\{C_{uv}\}_{\{u,v\}\in E})$, a graph $H(X,F)$, and a connected embedding $\Psi: V\to \P(X)$ of depth $d$, the algorithm $\mathcal{A}$ outputs the following 2-\CSP instance $\Phi(H(X,F),\Sigma^d,\{\tilde{C}_{xy}\}_{\{x,y\}\in F})$, where we only need to specify $\{\tilde{C}_{xy}\}_{\{x,y\}\in F}$. 

For every $x\in X$ define $V_x\subseteq V$ as follows: for all $v\in V$ we have
$v\in V_x$ if and only if $x\in \Psi(v)$. We further define $E_x$ to be all the edges in $E$ whose both end points are in $V_x$. Finally, define $d_x:=|V_x|$ and $\pi_x:V_x\to [d_x]$  to be some canonical bijection.  Note that since $\Psi$ has depth $d$ we know that $d_x\le d$ for all $x\in X$. 

Fix some $\{x,y\}\in F$. Let $V_{xy}\subseteq V$ and $E_{xy}\subseteq E$ be defined as follows.
$$V_{xy}:=V_x\cap V_y,\ \ E_{xy}:= \{\{u,v\}\in E\mid u\in V_x, v\in V_y\}.$$

Let $(a,b)\in \Sigma^d\times \Sigma^d$. We include $(a,b)\in \tilde{C}_{xy}$ if and only if both the following hold:

\begin{description}
    \item[Vertex Consistency:] For every $v\in V_{xy}$ we have $a_{\pi_x(v)}=b_{\pi_y(v)}$. 
    \item[Edge Consistency:] For every $\{u,v\}\in E_{xy}$ (resp.\ $\{u,u'\}\in E_{x}$ or $\{v,v'\}\in E_{y}$) we have $(a_{\pi_x(u)},b_{\pi_y(v)})\in C_{uv}$ (resp.\ $(a_{\pi_x(u)},a_{\pi_x(u')})\in C_{uu'}$ or $(b_{\pi_y(v)},b_{\pi_y(v')})\in C_{vv'}$). 
\end{description}

\paragraph{Completeness Analysis:} Suppose $\sigma:V\to \Sigma$ is a satisfying assignment to $\Gamma$ then consider the assignment $\tilde\sigma:X\to \Sigma^d$ to $\Phi$ constructed from $\sigma$ as follows. Let $\alpha\in \Sigma$ be an arbitrary element. For all  $x\in X$ we have:
$$
\tilde\sigma(x)_i:=\begin{cases}
\sigma(\pi_x^{-1}(i))&\text{ if }i\le d_x\\
\alpha&\text{ if }i> d_x
\end{cases}.
$$ 

We claim that $\tilde\sigma$ is a satisfying assignment to $\Phi$. To see this, consider an arbitrary $\{x,y\}\in F$. We verify that $(\tilde\sigma(x),\tilde\sigma(y))\in \tilde{C}_{xy}$ by checking the conditions of vertex consistency and edge consistency are met.  

For every $v\in V_{xy}$ we have by definition that $\tilde{\sigma}(x)_{\pi_x(v)}=\sigma(\pi_x^{-1}(\pi_x(v)))=\sigma(v)=\sigma(\pi_y^{-1}(\pi_y(v)))=\tilde{\sigma}(y)_{\pi_y(v)}$. This verifies vertex consistency. Next we verify edge consistency.

\begin{sloppypar}For every $\{u,v\}\in E_{xy}$ (resp.\ $\{u,u'\}\in E_{x}$ or $\{v,v'\}\in E_{y}$) we have $(\tilde{\sigma}(x)_{\pi_x(u)},\tilde\sigma(y)_{\pi_y(v)})=(\sigma(u),\sigma(v))$ (resp.\ $(\tilde{\sigma}(x)_{\pi_x(u)},\tilde\sigma(x)_{\pi_x(u')})=(\sigma(u),\sigma(u'))$ or $(\tilde{\sigma}(y)_{\pi_y(v)},\tilde\sigma(y)_{\pi_y(v')})=(\sigma(v),\sigma(v'))$) and this is  in $C_{uv}$ (resp.\ $C_{uu'}$ or $C_{vv'}$) because $\sigma$ is a satisfying assignment to $\Gamma$. \end{sloppypar}

\paragraph{Soundness Analysis:} Suppose $\tilde\sigma:X\to \Sigma^d$ is a satisfying assignment to $\Phi$ then consider the assignment $\sigma:V\to \Sigma$ to $\Gamma$ constructed from $\tilde\sigma$ as follows. 

Fix any function $\tau:V\to X$    that satisfies the following: for all $v\in V$, $\tau(v)\in\Psi(v)$. Informally, $\tau$ is picking a representative vertex in $\Psi(v)$ for each $v\in V$. For all  $v\in V$ we have:
$$
\sigma(v):=\tilde\sigma(\tau(v))_{\pi_{\tau(v)}(v)}.
$$ 

We first observe that the value $\sigma(v)$ is independent of the choice of $\tau(v)$, that is, 
\begin{equation}\label{eq:sigma-independent}
\forall {v \in V},\ \forall {x \in \Psi(v)},\quad \tilde\sigma(x)_{\pi_x(v)} = \sigma(v).
\end{equation}
To see~\eqref{eq:sigma-independent}, fix $v \in V$ and $x \in \Psi(v)$. 
Since $\Psi(v)$ is a connected subgraph of $H$, we have that there is a path from 
$\tau(v)$ to $x$ contained in $\Psi(v)$, and let $\tau(v) =: x_1-x_2-\cdots-x_r := x$ be one such path where for all
$i \in [r]$ we have $x_i \in \Psi(v)$. Since $\tilde\sigma$ is a satisfying assignment, for every $i \in [r-1]$,
by looking at the vertex consistency condition, we have $\tilde\sigma(x_i)_{\pi_{x_i}(v)} = \tilde\sigma(x_{i+1})_{\pi_{x_{i+1}}(v)}$. Thus, we have 
$\sigma(v) = \tilde\sigma(\tau(v))_{\pi_{\tau(v)}(v)} = \tilde\sigma(x_1)_{\pi_{x_1}(v)} = \tilde\sigma(x_r)_{\pi_{x_r}(v)} = \tilde\sigma(x)_{\pi_x(v)}$. This proves~\eqref{eq:sigma-independent}.

We claim that $\sigma$ is a satisfying assignment to $\Gamma$. To see this, consider an arbitrary $\{u,v\}\in E$. We verify below that $(\sigma(u),\sigma(v))\in {C}_{uv}$. 

From the definition of $\Psi$, we have that either (i) there exists $x\in X$ such that $x\in \Psi(u)\cap \Psi(v)$, or, (ii) there  exists $\{x,y\}\in F$ such that $x\in\Psi(u)$ and $y\in \Psi(v)$. 

\begin{description}
\item[Case (i):] By~\eqref{eq:sigma-independent},
we have $\sigma(u) = \tilde\sigma(x)_{\pi_x(u)}$ and $\sigma(v) = \tilde\sigma(x)_{\pi_x(v)}$.
Since $\{u,v\}\in E_x$ and $H(X,F)$ contains at least one edge incident with $x$, by the edge consistency condition 
on any such edge we have $(\sigma(u) = \tilde{\sigma}(x)_{\pi_{x}(u)},\sigma(v) = \tilde{\sigma}(x)_{\pi_{x}(v)})\in C_{uv}$, as desired.
\item[Case (ii):] By~\eqref{eq:sigma-independent},
we have $\sigma(u) = \tilde\sigma(x)_{\pi_x(u)}$ and $\sigma(v) = \tilde\sigma(y)_{\pi_y(v)}$.
Since $\{u,v\}\in E_{xy}$ and noting the edge consistency condition we have  $(\sigma(u)=\tilde{\sigma}(x)_{\pi_{x}(u)},\sigma(v)=\tilde{\sigma}(y)_{\pi_{y}(v)})\in {C}_{uv}$, as desired.
\end{description}

\paragraph{Runtime Analysis:} Finally, for a fixing of $\{x,y\}$ in $F$ and a fixing of $(a,b)\in\Sigma^d\times \Sigma^d$, the time needed to check if $(a,b)\in \tilde{C}_{xy}$ is at most $O(d^2\cdot |\Sigma|^2)$. Thus, the running time to produce $\Phi$ given $\Psi$ and $\Gamma$ is $O(|E|\cdot |\Sigma|^2+|F|\cdot |\Sigma|^{2d}\cdot d^2\cdot |\Sigma|^2)=O(|E|\cdot |F|\cdot |\Sigma|^{2d+2})$.
\end{proof}

An immediate corollary that will be useful to us is the following. 
\begin{corollary}[Putting together Theorem~\ref{thm:embedding} and Theorem~\ref{thm:embedCSP}]\label{cor:CSPembed}
There exists a constant $Z > 1$ and an algorithm $\mathcal{A}$ that takes as input  a 2-\CSP instance $\Gamma(G(V,E),\Sigma,\{C_{uv}\}_{\{u,v\}\in E})$ and an even integer $k\ge 6$,  and outputs a 2-\CSP instance $\Phi(H(X,F),\tilde\Sigma,\{\tilde{C}_{xy}\}_{\{x,y\}\in F})$ such that the following holds.
\begin{description}
\item[Size:] $H$ is a 3-regular bipartite graph with $|X|\le k$. We have $|\tilde\Sigma|=|\Sigma|^{Z\cdot \left(1+k^{-1}(|V|+|E|)\cdot \log k\right)}$.
    \item[Reduction Guarantee:]  $\Gamma$ is satisfiable if and only if $\Phi$ is satisfiable.
    \item[Runtime:] There is a polynomial $p:\N\to\N$ such that  $\mathcal{A}$ runs in time bounded above by $p\left(|E|\cdot k\cdot |\Sigma|^{\left(\frac{(|V|+|E|)\cdot \log k}{k}+1\right)}\right)$.
\end{description}
\end{corollary}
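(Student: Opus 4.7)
The plan is a straightforward composition: first apply Theorem~\ref{thm:embedding} to build a small 3-regular bipartite host graph $H$ together with a low-depth connected embedding of $G$ into $H$, and then feed $\Gamma$, $H$, and that embedding into the reduction of Theorem~\ref{thm:embedCSP}.

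\textbf{Step 1 (Build $H$ and $\Psi$).} Invoke the algorithm of Theorem~\ref{thm:embedding} on input $(G,k)$. Let $Z_0 \geq 1$ be the constant produced by that theorem. We obtain in time $(|V|+|E|)^{O(1)}$ a 3-regular simple bipartite graph $H(X,F)$ with no isolated vertices, $|X|\le k$, together with a connected embedding $\Psi:V\to \P(X)$ satisfying
\[
d := \Delta(\Psi) \;\le\; Z_0\cdot\Bigl(1+\tfrac{|V|+|E|}{k}\Bigr)\cdot \log k.
\]
Because $H$ is 3-regular, $|F|\le 3k/2$.

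\textbf{Step 2 (Reduce the CSP).} Feed $\Gamma$, $H$, and $\Psi$ into the algorithm of Theorem~\ref{thm:embedCSP}. It produces a 2-\CSP instance $\Phi(H(X,F),\Sigma^{d},\{\tilde C_{xy}\}_{\{x,y\}\in F})$ which is satisfiable iff $\Gamma$ is, and it runs in time $O(|E|\cdot |F|\cdot |\Sigma|^{2d+2})$. Setting $\tilde\Sigma := \Sigma^{d}$, the satisfiability equivalence in Corollary~\ref{cor:CSPembed} is immediate from Theorem~\ref{thm:embedCSP}.

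\textbf{Step 3 (Parameters and runtime).} It remains to absorb the constants. The depth bound from Step~1 gives
\[
|\tilde\Sigma| \;=\; |\Sigma|^{d} \;\le\; |\Sigma|^{Z_0\cdot(1+k^{-1}(|V|+|E|))\cdot \log k},
\]
so choosing $Z := Z_0$ yields the claimed size bound. For the runtime, $2d+2 \le 2Z_0(1+k^{-1}(|V|+|E|))\log k + 2$, so
\[
|E|\cdot |F|\cdot |\Sigma|^{2d+2} \;\le\; O\!\left(|E|\cdot k\cdot |\Sigma|^{O(1)\cdot \left(\frac{(|V|+|E|)\log k}{k}+1\right)}\right),
\]
which is bounded by $p\bigl(|E|\cdot k\cdot |\Sigma|^{(|V|+|E|)\log k/k+1}\bigr)$ for a suitable polynomial $p$; adding the $(|V|+|E|)^{O(1)}$ cost of Step~1 is absorbed into the same polynomial.

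There is no real obstacle: Theorems~\ref{thm:embedding} and~\ref{thm:embedCSP} already supply each ingredient, and the only bookkeeping is matching the constant $Z$ in the depth bound to the constant appearing in the alphabet exponent, and verifying that the factor $2d+2$ in the runtime of Theorem~\ref{thm:embedCSP} can be folded into the single polynomial $p$ advertised here. The mildest care is to note that the constant $Z$ of the corollary may need to be taken as $\max(Z_0, \text{const from } 2d+2)$ so that the same constant controls both the alphabet bound and the exponent in the runtime; this is just a choice of the universal constant and does not require any further argument.
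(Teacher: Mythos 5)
Your proposal is correct and matches the paper's own proof essentially verbatim: run the embedding algorithm of Theorem~\ref{thm:embedding} on $(G,k)$, feed the resulting $H$ and $\Psi$ into the reduction of Theorem~\ref{thm:embedCSP}, and absorb the constants into $Z$ and the polynomial $p$ (using $|F|\le 1.5k$). The only cosmetic difference is that the paper, like you, actually only establishes an upper bound on $|\tilde\Sigma|$ rather than the literal equality in the statement, which is harmless for all downstream uses.
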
 
\begin{proof}
Given as input a 2-\CSP instance $\Gamma(G(V,E),\Sigma,\{C_{uv}\}_{\{u,v\}\in E})$ and an integer $k$ the algorithm $\mathcal{A}$ of the corollary statement does the following. Let $\mathcal{A}_0$ and $\mathcal{A}_1$ 
 be the algorithms guaranteed through Theorem~\ref{thm:embedding} and Theorem~\ref{thm:embedCSP} respectively. 
 
 $\mathcal{A}$ first runs $\mathcal{A}_0$ on input $(G(V,E),k)$ and obtains as output 
a  bipartite 3-regular graph $H(X,F)$ with no isolated vertices  and a connected embedding $\Psi:V\to\P(X)$ such that   $|X|\le k$ and $\Delta(\Psi)\le  Z\cdot \left(1+k^{-1}(|V|+|E|)\right) \cdot \log k$. 

 \begin{sloppypar}$\mathcal{A}$ then runs $\mathcal{A}_1$ on input 
 $\left(\Gamma(G(V,E),\Sigma,\{C_{uv}\}_{\{u,v\}\in E}),H(X,F),\Psi: V\to \P(X)\right)$, to obtain a 2-\CSP instance $\Phi(H(X,F),\tilde\Sigma,\{\tilde{C}_{xy}\}_{\{x,y\}\in F})$ where $|\tilde\Sigma|$ is some integer bounded above by $ |\Sigma|^{Z\cdot \left(1+k^{-1}(|V|+|E|)\right)\cdot \log k}$. \end{sloppypar}
 
 Finally, $\mathcal{A}$ outputs $\Phi$. The reduction guarantee of the corollary statement follows from the reduction guarantee of Theorem~\ref{thm:embedCSP}. 

 The runtime of the algorithm is upper bounded by the sum of the runtimes of $\mathcal{A}_0$ and $\mathcal{A}_1$. The output of $\mathcal{A}_0$ is produced in time $(|V|+|E|)^{O(1)}$. 
 The output of $\mathcal{A}_1$ is produced in time $O(|E|\cdot |F|\cdot |\Sigma|^{2\cdot(1+Z\cdot \left(1+k^{-1}(|V|+|E|)\right)\cdot \log k)})$. The claim on the runtime in the corollary statement follows by noting that $|F|\le 1.5\cdot |X|\le 1.5\cdot k$.
\end{proof}
\subsection{\ETH Lower Bounds}

Our starting point for the lower bounds in this subsection is the following result on 3-coloring:

\begin{theorem}[\ETH Lower Bound for 3-coloring \cite{cygan2016tight}]\label{thm:3Col}
    Assuming \ETH, there is some $\delta>0$ such that no algorithm   can decide all 2-\CSP{} instances $\Gamma(G(V,E),\Sigma,\{C_{uv}\}_{\{u,v\}\in E})$, where $|\Sigma|=3$ and    $G$ is   4-regular, while running in $2^{\delta |V|}$ time. 
\end{theorem}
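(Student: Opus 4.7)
The plan is to derive the 4-regular 3-coloring lower bound by chaining three linear-blowup reductions, so that the $2^{\varepsilon n}$ bound of \ETH survives as $2^{\delta|V|}$ on the final instance.

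First I would invoke \ETH in the strong form of Hypothesis~\ref{hyp:eth}: no $2^{\varepsilon n}$ algorithm decides 3-SAT even when every variable occurs in at most three clauses. In particular the number of clauses satisfies $m\le 3n$, so the input size is linear in $n$.

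Next I would apply the textbook Karp-style reduction from 3-SAT to 3-coloring: include a triangle $\{T,F,G\}$ of \emph{palette} vertices, for each variable $x_i$ add a literal pair $(x_i,\bar x_i)$ joined to each other and to $G$, and for each clause install a constant-size OR-gadget connected to the appropriate literals and to $T,F$. The resulting graph has $O(n+m)=O(n)$ vertices and is 3-colorable if and only if the formula is satisfiable, already yielding a $2^{\delta|V|}$ lower bound for 3-coloring on a graph of bounded (but non-uniform) degree.

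The main obstacle is converting this bounded-degree instance into a 4-regular one with only linearly many new vertices and without disturbing 3-colorability. For each vertex $v$ of degree $d(v)>4$ I would apply a standard \emph{degree-splitting} gadget: replace $v$ by a cycle of $d(v)$ copies, each inheriting one incident edge of $v$, with equality-enforcing subgadgets that force all copies to share a color in any proper 3-coloring. For each vertex of degree $<4$ I would attach a small constant-size 4-regular 3-colorable \emph{padding} gadget whose interface provides exactly the missing degree. Each modification adds only $O(1)$ vertices per original vertex, preserving the linear size. The delicate part, which is the technical content of \cite{cygan2016tight}, is engineering the padding so the final graph is exactly 4-regular with no residual degree mismatch while simultaneously keeping the yes-case 3-colorable and preserving the no-case; once this is done, composing the three reductions transports the \ETH bound to the claimed $2^{\delta|V|}$ lower bound for 4-regular 3-coloring.
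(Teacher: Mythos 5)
There is a gap, and it stems from reading the statement as being about literal graph 3-coloring rather than about 2-\CSP{} with a 3-letter alphabet. The theorem only requires that the constraint graph be 4-regular and that $|\Sigma|=3$; the constraints $C_{uv}$ are arbitrary binary relations, not necessarily the inequality relation. The paper therefore does not engineer any 4-regular coloring gadgets at all: it cites \cite{cygan2016tight} as a black box for the lower bound on 2-\CSP{} instances with $|\Sigma|=3$ and \emph{maximum degree} 4, and then regularizes by adding \emph{dummy constraints} (constraints accepting every pair of values) between pairs of degree-deficient vertices. Since a dummy constraint never affects satisfiability, this padding is trivially correct, adds at most linearly many edges, and completely sidesteps the step you describe as ``the delicate part.'' Your first two steps (bounded-occurrence \ETH{} via Sparsification plus Tovey, then a linear-size Karp reduction to 3-coloring, viewed as a 2-\CSP{}) are consistent with what \cite{cygan2016tight} does, but the payoff of working in the 2-\CSP{} formulation is precisely that the regularization becomes a one-line remark.

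As written, your step 3 is not a proof: you defer ``engineering the padding'' to the citation, and the gadgets you sketch do not obviously exist. A connected padding gadget that is 4-regular except for a single interface vertex of degree 3 is impossible by the handshake lemma (a graph cannot have exactly one odd-degree vertex), so interfaces must be paired or have even deficiency; the cycle-of-copies splitting gadget with equality-enforcing subgadgets itself perturbs degrees, since in 3-coloring ``equality'' between two vertices must be enforced by auxiliary vertices and edges that add degree to the very vertices you are trying to control; and you must also preserve simplicity and 3-colorability of the yes-instances throughout. None of this is insurmountable, but it is exactly the content you would need to supply, and it is unnecessary once you notice that the statement permits arbitrary constraints over $\Sigma$ with $|\Sigma|=3$, so that always-true constraints can absorb all degree deficiencies.
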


We remark that  in \cite{cygan2016tight} the lower bound is proven for 2-\CSP instances with max-degree 4, but this can be easily extended to 4-regular graphs by adding dummy constraints\footnote{A dummy constraint on two variables accepts every possible assignment to the two variables. } between pairs of vertices which are both of degree less than 4.


\main*
\begin{proof}
Suppose there is some $k\ge k_0$ and an algorithm $\mathcal{A}_k$ as suggested in this theorem statement for $\alpha:=\frac{\delta}{4Z}$ and $ k_0:= Z^2/\delta^2$, where $\delta$ is the constant  referred to in Theorem~\ref{thm:3Col} and $Z$ is the constant  referred to in Corollary~\ref{cor:CSPembed}. We will then design an algorithm $\tilde{\mathcal{A}}$ that runs in $2^{\nicefrac{\delta n}{2}}$ time and can decide 2-\CSP{}s on $n$ variables/vertices even when the alphabet set is of size 3 and the constraint graph is 4-regular. This would imply that \ETH is false (from Theorem~\ref{thm:3Col}). 

    Given $\tilde{\mathcal{A}}$ as input  a 2-\CSP instance $\Gamma(G(V,E),\Sigma,\{C_{uv}\}_{\{u,v\}\in E})$   where $|\Sigma|=3$ and 4-regular graph $G$, $\tilde{\mathcal{A}}$ first runs the algorithm in Corollary~\ref{cor:CSPembed} $(\Gamma,k)$ as input and obtains as output a 2-\CSP instance $\Phi(H(X,F),\tilde\Sigma,\{\tilde{C}_{xy}\}_{\{x,y\}\in F})$ with the guarantee that $|X|\le k$ and $|\tilde\Sigma|\le 3^{Z\cdot\left(1+\frac{|V|\log k}{k}\right)}$, for some universal constant $Z\ge  1$ (also note that the size of $\Phi$ is at most $O(k\cdot |\tilde\Sigma|)=O\left(k\cdot 3^{Z\cdot\left(1+\frac{|V|\log k}{k}\right)}\right)= O\left(k\cdot 3^{Z\cdot\left(1+\frac{|V|\log k_0}{k_0}\right)}\right)<2^{\delta |V|}$).

 Then $\tilde{\mathcal{A}}$ feeds $\Phi$ as input to $\mathcal{A}$  and obtains as output $\lambda\in\{0,1\}$. $\tilde{\mathcal{A}}$ outputs $\lambda$.  The reduction guarantee of the theorem statement follows from the reduction guarantee of Corollary~\ref{cor:CSPembed}. 

 The runtime of the algorithm is upper bounded by the sum of the runtimes of $\mathcal{A}_k$ and the algorithm in Corollary~\ref{cor:CSPembed}. The output of the algorithm in Corollary~\ref{cor:CSPembed} is produced in time $p\left(|V|\cdot k\cdot 3^{}\right)$ for some polynomial $p$. 
 The output of $\mathcal{A}$ is produced in time $|\tilde\Sigma|^{\alpha\cdot k/\log k}\le 3^{Z\cdot\left(1+\frac{|V|\log k}{k}\right)\cdot \alpha\cdot \frac{k}{\log k}}=3^{\alpha\cdot Z\cdot \frac{k}{\log k}+\alpha\cdot Z\cdot |V|}\le 3^{2\alpha\cdot Z\cdot |V|}\le 2^{4\alpha\cdot Z\cdot |V|}=2^{\delta \cdot |V|}$.  
\end{proof}

For most applications, the following corollary is sufficient. 
 
\begin{corollary}\label{cor:main} 
Assuming \ETH, there is no $f(k)\cdot |\Sigma|^{o(k/\log k)}$ time algorithm that can decide all  2-\CSP instances on 3-regular bipartite constraint graphs, where $\Sigma$ is the alphabet set, $k$ is the number of constraints, and $f$ is an arbitrary function. 
\end{corollary}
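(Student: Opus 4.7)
The plan is to deduce Corollary~\ref{cor:main} from Theorem~\ref{thm:main} by a routine uniformization argument: Theorem~\ref{thm:main} rules out, for every sufficiently large fixed even $k$, an algorithm running in $O(|\Sigma|^{\alpha k/\log k})$ time for a single universal $\alpha>0$, and this already implies the non-uniform $f(k)\cdot |\Sigma|^{o(k/\log k)}$ lower bound simply by specializing to one appropriately chosen value of $k$.

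First I would translate between the two parameterizations. Theorem~\ref{thm:main} uses $k$ for the number of variables of a 3-regular bipartite graph, whereas the corollary uses $k$ for the number of constraints; on a 3-regular graph with $n$ vertices there are $m=3n/2$ edges, so $g(m)=o(m/\log m)$ as $m\to\infty$ is the same condition as $g(3n/2)=o(n/\log n)$ as $n\to\infty$. Next, suppose for contradiction that some algorithm $\mathcal{A}$ decides every such 2-\CSP instance in time $f(m)\cdot |\Sigma|^{g(m)}$ with $g(m)=o(m/\log m)$. Let $\alpha>0$ and $k_0\in\mathbb{N}$ be the constants supplied by Theorem~\ref{thm:main}. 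By the definition of little-$o$, there exists an even integer $n^{\star}\ge k_0$ with $g(3n^{\star}/2)\le \alpha n^{\star}/\log n^{\star}$; fix one such $n^{\star}$. Restrict $\mathcal{A}$ to inputs whose 3-regular bipartite constraint graph has exactly $n^{\star}$ vertices; call this restriction $\mathcal{A}_{n^{\star}}$. Since $n^{\star}$ is now a fixed constant, $f(3n^{\star}/2)$ is a fixed constant as well, and therefore $\mathcal{A}_{n^{\star}}$ runs in time
\[
f(3n^{\star}/2)\cdot |\Sigma|^{g(3n^{\star}/2)} \;\le\; f(3n^{\star}/2)\cdot |\Sigma|^{\alpha n^{\star}/\log n^{\star}} \;=\; O\bigl(|\Sigma|^{\alpha n^{\star}/\log n^{\star}}\bigr).
\]
This matches the runtime guarantee in Theorem~\ref{thm:main} for the fixed even integer $n^{\star}\ge k_0$, so \ETH would fail, which is the desired contradiction.

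There is essentially no hard step here; the only subtlety is the bookkeeping of parameters. One must make sure that $n^{\star}$ can simultaneously be chosen (i) even, (ii) at least $k_0$, and (iii) large enough that $g(3n^{\star}/2)\le \alpha n^{\star}/\log n^{\star}$, which is immediate since $g(m)=o(m/\log m)$ guarantees infinitely many such values. One should also note the mild distinction that Corollary~\ref{cor:main} allows $f$ to be \emph{arbitrary} (not necessarily computable), but this does not affect the argument because fixing a single $n^{\star}$ collapses $f(3n^{\star}/2)$ to a constant, eliminating any need to evaluate $f$ at runtime.
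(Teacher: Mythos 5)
Your proposal is correct and matches the paper's (implicit) derivation: the paper states no separate proof for Corollary~\ref{cor:main}, relying on exactly the specialization argument it sketches when deducing Theorem~\ref{thm:sparsemarx} from Theorem~\ref{thm:main}, namely fixing one sufficiently large even $k$ so that the little-$o$ exponent drops below $\alpha k/\log k$ and the $f$-factor becomes a constant. Your extra care with the constraints-versus-variables bookkeeping ($m=3n/2$) and with $f$ being arbitrary rather than computable is consistent with the paper's intent and introduces no gap.
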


We now turn to prove Theorem~\ref{thm:mainwithupperbound}
.

\mainalso*
\begin{proof}
Suppose there is a function $f$ that is good and an algorithm $\mathcal{A}_f$ as suggested in this theorem statement for $\alpha:=\frac{\delta}{4Z}$ where $\delta$ is the constant  referred to in Theorem~\ref{thm:3Col} and $Z$ is the constant  referred to in Corollary~\ref{cor:CSPembed}. We will then design an algorithm $\tilde{\mathcal{A}}$ that runs in $2^{\nicefrac{\delta n}{2}}$ time and can decide 2-\CSP{}s on $n$ variables/vertices even when the alphabet set is of size 3 and the constraint graph is 4-regular. This would imply that \ETH is false (from Theorem~\ref{thm:3Col}). 

Given $\tilde{\mathcal{A}}$ as input  a 2-\CSP instance $\Gamma(G(V,E),\Sigma,\{C_{uv}\}_{\{u,v\}\in E})$   where $|\Sigma|=3$ and 4-regular graph $G$, $\tilde{\mathcal{A}}$ first computes an appropriate integer $k$ which it will later feed as input to the algorithm in Corollary~\ref{cor:CSPembed}.
Let $k_0$ be the smallest integer such that $f(k_0)\ge 2$. 
Let $k$ be the largest integer  such that $f(k-1)<3^{Z\cdot\left(1+\frac{|V|\log k}{k}\right)}$. Such  an integer $k$ can be computed in time $|V|^{O(1)}$ in the following way: We evaluate $f(1)$, $f(2)$, $\ldots$, and stop when either $f(k)\ge 3^{Z\cdot\left(1+\frac{|V|\log k}{k}\right)}$ or if computing $f(k)$ does not terminate after $|V|^2+k_0$ steps (in which case we can conclude that $f(k)>3^{Z\cdot\left(1+\frac{|V|\log k}{k}\right)}$).

    Now $\tilde{\mathcal{A}}$ first runs the algorithm in Corollary~\ref{cor:CSPembed} $(\Gamma,k)$ as input and obtains as output a 2-\CSP instance $\Phi(H(X,F),\tilde\Sigma,\{\tilde{C}_{xy}\}_{\{x,y\}\in F})$ with the guarantee that $|X|\le k$, $H(X,F)$ is 3-regular bipartite, and $|\tilde\Sigma|\le 3^{Z\cdot\left(1+\frac{|V|\log k}{k}\right)}\le f(k)$, for some universal constant $Z\ge  1$ (also note that the size of $\Phi$ is at most $O(k\cdot |\tilde\Sigma|)=O\left(k\cdot 3^{Z\cdot\left(1+\frac{|V|\log k}{k}\right)}\right) < 2^{\delta |V|}$).

 Then $\tilde{\mathcal{A}}$ feeds $\Phi$ as input to $\mathcal{A}$  and obtains as output $\lambda\in\{0,1\}$. $\tilde{\mathcal{A}}$ outputs $\lambda$.  The reduction guarantee of the theorem statement follows from the reduction guarantee of Corollary~\ref{cor:CSPembed}. 

 The runtime of the algorithm is upper bounded by the sum of the runtimes of $\mathcal{A}_f$ and the algorithm in Corollary~\ref{cor:CSPembed}. The output of the algorithm in Corollary~\ref{cor:CSPembed} is produced in time $p\left(|V|\cdot k\cdot 3^{}\right)$ for some polynomial $p$. 
 The output of $\mathcal{A}$ is produced in time $|\tilde\Sigma|^{\alpha\cdot k/\log k}\le 3^{Z\cdot\left(1+\frac{|V|\log k}{k}\right)\cdot \alpha\cdot \frac{k}{\log k}}=3^{\alpha\cdot Z\cdot \frac{k}{\log k}+\alpha\cdot Z\cdot |V|}\le 3^{2\alpha\cdot Z\cdot |V|}\le 2^{4\alpha\cdot Z\cdot |V|}=2^{\delta \cdot |V|}$.  
\end{proof}

The parameter setting of Theorem~\ref{thm:mainwithupperbound} may seem peculiar at first glance, but 
it has applications in lower bounds for classic (not parameterized) problems under ETH. 
For an illustration, consider a recent work on \textsc{Global Label MinCut}~\cite{DBLP:conf/soda/JaffkeLMPS23}.
The actual details of the considered problem are not important here; it suffices to say that the main
technical result of~\cite{DBLP:conf/soda/JaffkeLMPS23} is a parameterized reduction from 
2-\CSP{} to one parameterization of \textsc{Global Label MinCut} that,
given a 2-\CSP{} instance $\Gamma(G(V,E),\Sigma,\{C_{uv}\}_{\{u,v\}\in E})$ with $a := |V| + |E|$
produces an equivalent instance of \textsc{Global Label MinCut}
of size $2^{{O}(a \log a)} \cdot |\Sigma|$ with parameter value at most $a$.
This reduction, combined with Theorem~\ref{thm:mainwithupperbound}, refutes (assuming ETH) an existence of an algorithm
for \textsc{Global Label MinCut} with running time $n^{o(\log n/ (\log \log n)^2)}$, where $n$ is the size of the instance,
providing a nearly-tight lower bound to an existing quasipolynomial-time algorithm~\cite{DBLP:conf/soda/GhaffariKP17}.

Indeed, consider a good function $f(n) = n^n$.
Let $\Gamma(G(V,E),\Sigma,\{C_{uv}\}_{\{u,v\}\in E})$ be a 2-\CSP{} instance on $k$ variables, $G(V,E)$ 3-regular,
and alphabet set $\Sigma$ where $|\Sigma| < f(k)$.
The aforementioned reduction, applied to this 2-\CSP{} instance, produces a \textsc{Global Label MinCut}
instance of size bounded by 
\[ 2^{{O}(k \log k)} \cdot k^k = 2^{{O}(k \log k)}.\]
Now, a hypothetical algorithm solving this instance in time $n^{o(\log n / (\log \log n)^2)}$ solves
the original 2-\CSP{} instance in time $2^{o(k^2)}$, which is asymptotically smaller 
than $f(k-1)^{\alpha k / \log k}$ for a constant $\alpha > 0$.

\subsection*{Acknowledgement}
Karthik C.\ S.\ is supported by the National Science Foundation under Grant CCF-2313372 and by the Simons  
 Foundation, Grant Number 825876, Awardee Thu D. Nguyen. D\'{a}niel Marx is supported by the European Research Council (ERC) consolidator grant No.~725978 SYSTEMATICGRAPH. 
 During this research Marcin Pilipczuk was part of BARC, supported by the VILLUM Foundation grant 16582. 
 We
are also grateful to the  Dagstuhl Seminar 23291 for a special collaboration
opportunity.

\bibliographystyle{alpha}
\bibliography{references}
\end{document}